\immediate\write18{makeindex FINALVERSION.nlo -s nomencl.ist -o FINALVERSION.nls}
\documentclass[journal]{IEEEtran}


%
\usepackage{cite}

%
\ifCLASSINFOpdf
   \usepackage[pdftex]{graphicx}
\else
\fi
%
%

%
\usepackage{amsmath}
\usepackage{amsthm}
\usepackage{amssymb,amsfonts} 
\usepackage{mathtools} 
%

%
\usepackage{algorithmic}

%
\usepackage{array}


\usepackage[table]{xcolor}

\ifCLASSOPTIONcompsoc
  \usepackage[caption=false,font=normalsize,labelfont=sf,textfont=sf]{subfig}
\else
  \usepackage[caption=false,font=footnotesize]{subfig}
\fi

\usepackage{nomencl}
\makenomenclature

\hyphenation{op-tical net-works semi-conduc-tor}

\usepackage{dsfont} 
\usepackage{multirow} 

\newtheorem{prop}{Proposition}

\begin{document}
%
\title{A Graph-Theoretic Approach for Spatial Filtering and Its Impact on Mixed-type Spatial Pattern Recognition in Wafer Bin Maps}
%
%
%

\author{Ahmed Aziz Ezzat,~\IEEEmembership{Member,~IEEE,}
        Sheng Liu, Dorit Hochbaum, and Yu Ding,~\IEEEmembership{Senior Member,~IEEE}
\thanks{A. A. Ezzat is with the Department
of Industrial \& Systems Engineering Engineering, Rutgers University, E-mail: aziz.ezzat@rutgers.edu}
\thanks{S. Liu is with the Rotman School of Management, University of Toronto}
\thanks{D. Hochbaum is with the Department of Industrial Engineering \& Operations Research, University of California, Berkeley}
\thanks{Y. Ding is with the Department of Industrial \& Systems Engineering, Texas A\&M University}
}
%
%

\markboth{\copyright[$2021$] IEEE. This work has been accepted at the IEEE for publication.}%
{Aziz Ezzat \MakeLowercase{\textit{et al.}}: Spatial Pattern Recognition via Adjacency-Clustering}
%



\maketitle

\begin{abstract}
Statistical quality control in semiconductor manufacturing hinges on effective diagnostics of wafer bin maps, wherein a key challenge is to detect how defective chips tend to spatially cluster on a wafer\textemdash a problem known as \textit{spatial pattern recognition}. Recently, there has been a growing interest in \textit{mixed-type} spatial pattern recognition\textemdash when multiple defect patterns, of different shapes, co-exist on the same wafer. Mixed-type spatial pattern recognition entails two central tasks: (1) spatial filtering, to distinguish systematic patterns from random noises; and (2) spatial clustering, to group filtered patterns into distinct defect types. Observing that spatial filtering is instrumental to high-quality mixed-type pattern recognition, we propose to use a graph-theoretic method, called adjacency-clustering, which leverages spatial dependence among adjacent defective chips to effectively filter the raw wafer maps. Tested on real-world data and compared against a state-of-the-art approach, our proposed method achieves at least $46$\% gain in terms of internal cluster validation quality (i.e., validation without external class labels), and about $\hspace{-0.5mm}5$\% gain in terms of Normalized Mutual Information\textemdash an external cluster validation metric based on external class labels. Interestingly, the margin of improvement appears to be a function of the pattern complexity, with larger gains achieved for more complex-shaped patterns. 
\end{abstract}

\begin{IEEEkeywords}
Clustering, Graph theory, Pattern recognition, Spatial data science, Unsupervised learning, Wafer bin maps. 
\end{IEEEkeywords}

\IEEEpeerreviewmaketitle

\section{Introduction}

\IEEEPARstart{I}{ntegrated} circuits (IC), colloquially known as chips, are essential to most, if not all, electronic devices. The central step in IC manufacturing is wafer fabrication, in which a batch of chips are fabricated on round-shaped silicon wafers through a series of complex electrochemical processes including slicing silicon-rich ingots into round-shaped thin wafers, wafer oxidation and material deposition, photolithography, ion implantation, and etching \cite{el2003reusable}. Once fabricated, all wafers undergo an operational quality performance test, known as wafer probing, in which chips are labeled as functional or defective by passing an input signal and collecting the corresponding output. This step results in a two-dimensional graphical representation called a \textit{wafer bin map}\textemdash a gridded representation of a wafer in which each grid point represents the spatial location of a chip and is assigned a binary value (e.g., $0$ or $1$) denoting a functional or a defective chip, respectively. Figure \ref{fig:mot1} shows examples of three different wafer bin maps resulting from wafer probing tests, where defective chips are denoted by red squares. \par

\begin{figure}[h]
    \centering
    \includegraphics[trim = 0 0.5cm 0 0.75cm, width = 1\linewidth]{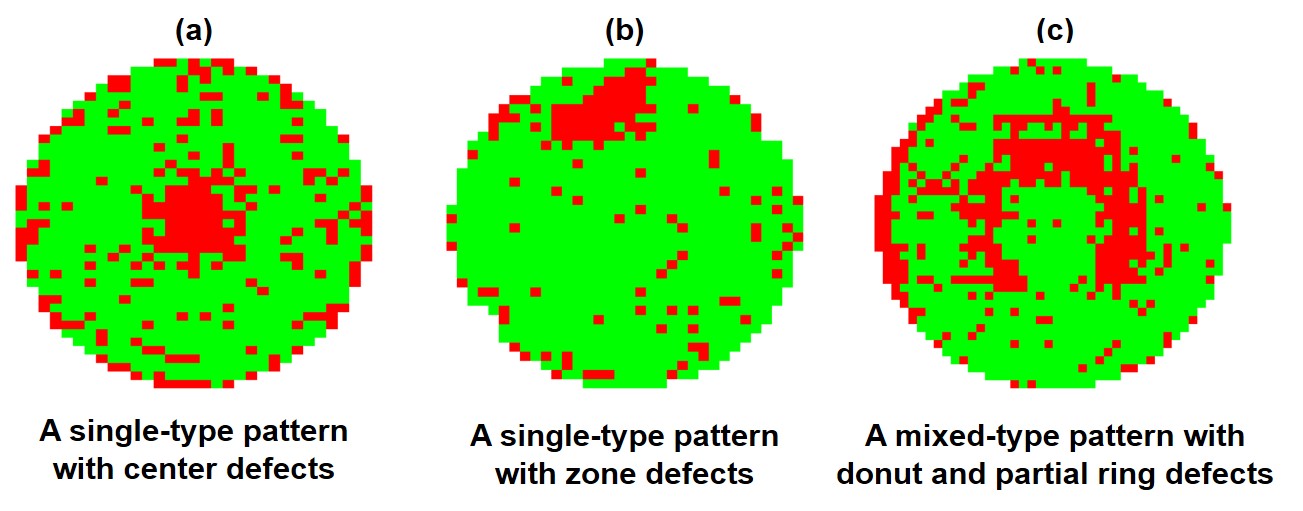}
    \caption{Examples of wafer bin maps. Panels (a-b) are \textit{single-type} patterns, i.e. one defect pattern per wafer, while Panel (c) shows a \textit{mixed-type} defect pattern. \textit{Random patterns} due to inherent process variation are represented by the scattered red squares on the wafer maps, which overlap with \textit{systematic patterns} (e.g. center, zone, donut) that are attributed to assignable root causes.
    }
    \label{fig:mot1}
\end{figure}

A careful analysis of wafer bin maps is pivotal to quality control efforts in the semiconductor manufacturing industry. By investigating the spatial defect patterns on the fabricated wafers, i.e., how the defective chips tend to spatially cluster, one can infer instrumental insights about the root causes of defect occurrence, and subsequently suggest corrective actions to mitigate future failures. This problem, often referred to in the literature as \textit{spatial pattern recognition} (SPR), is the focus of this paper. SPR 
is of extreme importance to pinpoint possible root causes of failures in wafer fabrication. In fact, several spatial defect patterns in wafer bin maps can be directly traced to common root causes of failure. For instance, a circular-shaped, center-located defect pattern, as shown in Figure \ref{fig:mot1}(a), often corresponds to chemical stains or mechanical equipment faults \cite{hwang2007model,kim2018detection}, while a zone-shaped, edge-located defect pattern, as shown in Figure \ref{fig:mot1}(b), can be traced to uneven polishing or edge-die effects \cite{cunningham1998statistical}. A center-located, donut-shaped defect pattern, as shown in Figure \ref{fig:mot1}(c), is routinely observed in wafer data due to possible tooling problems \cite{neo2017failure}. These spatial patterns like center, zone, or donut, wherein defective chips ``cluster'' to form distinct shapes are referred to as \textit{systematic patterns} since they correspond to an assignable root cause. In contrast, randomly scattered defective chips in Figure \ref{fig:mot1}(a)-(c) are called \textit{random patterns}, or \textit{noises}, since they are merely artifacts of random process variation. \par 

In the SPR literature, defect patterns in wafer bin maps can either be \textit{single-type} or \textit{mixed-type}. The former refers to wafer maps that host only one defect pattern, while the latter refers to wafer maps in which two or more defect patterns co-exist. Figure \ref{fig:mot1}(a-b) show examples of single-type defect patterns, whereas Figure \ref{fig:mot1}(c) depicts a mixed-type defect pattern. With the ever-growing increase in scale and sophistication of the wafer fabrication process, mixed-type patterns are increasingly observed in production data. Barring few recent efforts \cite{kim2018detection,tello2018deep,kyeong2018classification, kong2019recognition, lee2020semi, wang2020deformable}, the problem of mixed-type SPR has received less attention relative to its single-type counterpart. 

A typical SPR analysis of a wafer bin map, be it hosting single- or mixed-type patterns, involves two pillar tasks: (1) Spatial filtering, i.e., to de-noise raw wafer data by separating systematic from random patterns; and (2) Spatial clustering, i.e., to group the filtered patterns into one or more sub-clusters pertaining to different types of defect patterns (e.g., center, donut). The overwhelming majority of the literature has been devoted to improving the effectiveness of the second task, namely spatial clustering, among which those that are based on model- or density-based clustering \cite{yuan2007model,yuan2008spatial,hwang2007model, yuan2011detection, jin2019novel}, kernel-based clustering \cite{wang2009separation, chao2009wafer}, similarity-based metrics such as correlograms, nearest-neighbor measures, or Voronoi-based partitioning \cite{taam1993detecting, jeong2008automatic, taha2017clustering}, feature extraction-based approaches such as those utilizing Hugh transforms, single value decomposition, or mask-based features \cite{white2008classification, liu2014detecting, kim2015regularized, wang2019wafer}, decision trees and manifold learning \cite{piao2018decision, yu2015wafer}, regression-based consensus networks and ensemble learning \cite{piao2018decision,adly2015randomized, saqlain2019voting}, and neural networks, especially those based on adaptive resonance theory \cite{chen2000neural, su2002neural, di2005unsupervised, huang2007clustered,liu2013intelligent, chien2013system}, or on deep learning-based architectures \cite{lee2016deep, nakazawa2018wafer, yu2019wafer, imoto2019cnn, kong2020semi, hwang2020variational, tsai2020light, jang2020support, hyun2020memory}. \par

On the other hand, methods for the first task, namely spatial filtering, are mostly dominated by \textit{ad hoc} heuristics intended to pre-process or denoise the raw wafer data, with an implicit assumption that the deficiencies of a poorly designed filtering step will be ultimately corrected in the second task (spatial clustering). While this assumption may be acceptable for single-type SPR, we claim that spatial filtering is of extreme importance to mixed-type SPR. Motivated by a similar observation, an algorithm called \textit{connected path filtering} (CPF) has been recently proposed to filter mixed-type wafer bin maps \cite{kim2018detection}. The authors proposed to pair CPF with a spatial clustering model that acts on the filtered data to produce the final SPR results. CPF is a heuristic algorithm that searches all possible connected paths of defective chips on a wafer and only keeps those paths that are longer than a pre-set threshold, $M$. \par 

While valuable on its own, our analysis of multiple wafer maps, as we will elaborate in the sequel, has revealed two main limitations of the CPF approach. First, CPF does not directly leverage local spatial neighborhood information, but instead, it disregards all defective chips that do not belong to a connected path which is longer than a globally pre-set value, $M$. In other words, if a chip is labeled as ``functional,'' while all of its neighbors are not, CPF does not make use of the local neighborhood information to possibly re-assign the label of this chip. A direct consequence of this limitation is that the filtered results may end up having irregular shapes (few functional chips surrounded by large groups of defective chips, or vice-versa), which may severely compromise the quality of the downstream clustering task. To demonstrate this first limitation, let us take a look at Figure \ref{fig:motiv1}, where Panel (a) shows a raw wafer bin map with a mixed-type pattern comprising partial ring and donut defects. The results from CPF (Panel b) clearly show an irregular shape due to either functional chips for which the values should have been updated to defective based on their local neighborhood, or alternatively, defective chips which should have been re-labeled as functional. This irregularity in the shapes misleads the downstream spatial clustering (performed using a mixture model\textemdash to be reviewed later) to mistakenly identify some defective chips as independent sub-clusters. The results in Panel (c) appear to be more visually appealing with a clear visual distinction between the two overlapping defect types, making the downstream clustering method (performed using the same mixture model) relatively straightforward. The result in Panel (c) is in fact, produced by our proposed filtering approach, for which the details are unraveled in Section \ref{sec:approach}. \par

\begin{figure} [h]
\centering
    \includegraphics[trim = 0 0.75cm 0 0.75cm, width = 1\linewidth]{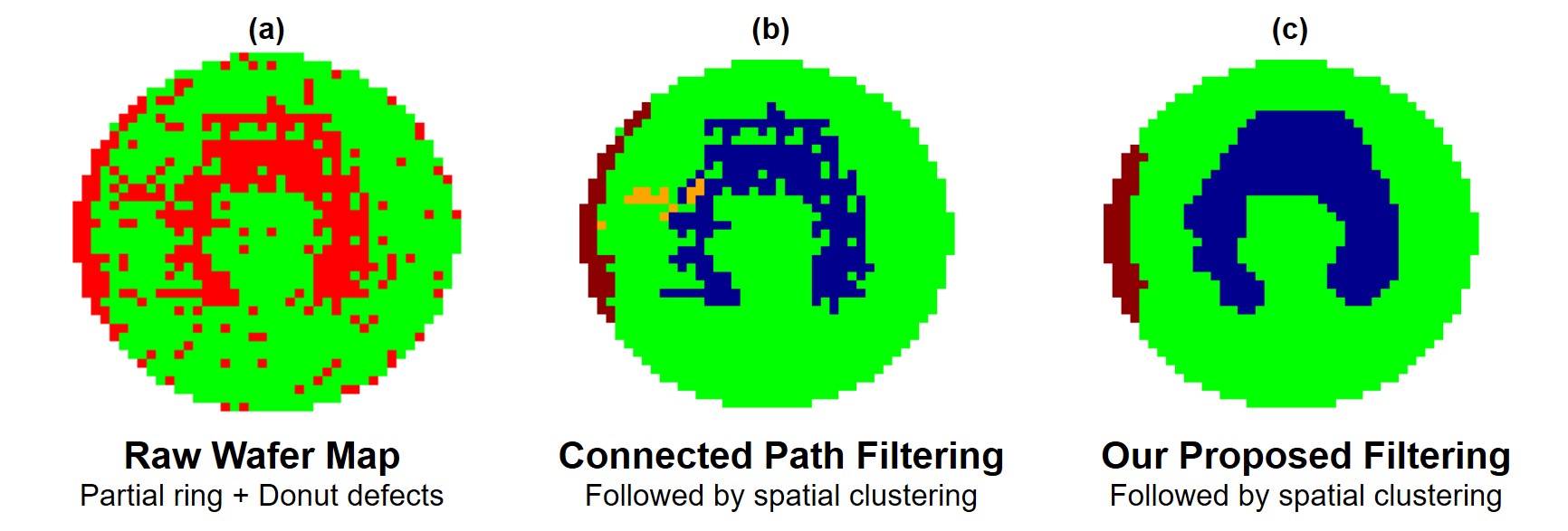}
    \caption{Analysis of a wafer map with overlapped partial ring and donut defects. (a): raw wafer bin map, (b): CPF with $M$ = 5, and (c): Better filtering results produced by our proposed method.}
    \label{fig:motiv1}
\end{figure}

The second limitation of the CPF approach is its choice of $M$. Our findings, to be presented in Section \ref{sec:application}, suggest that there does not appear to be a default value for $M$ that works universally well for all wafers and combinations of defect types, making the choice of $M$ wafer-specific. As outlined by the authors in \cite{kim2018detection}, this choice is made via interaction with domain experts. This limitation may severely hamper the applicability of CPF in practice. Given the large production volumes typical of wafer production lines, the need for domain experts to constantly weigh in and update the value of $M$ can be extraordinarily inefficient and thus impractical. \par 

Motivated by the need to address those two limitations, we propose an innovative approach for mixed-typed SPR in wafer bin maps. Our major finding is that a graph-theoretic approach which leverages the local spatial dependence structure can considerably improve the spatial filtering step, and ultimately, the overall SPR quality. Specifically, we propose to use the adjacency-clustering (AC) method, which was originally introduced by \cite{hochbaum2018adjacency} for yield prediction. Although technically similar, the main function of AC in our work is different from that in \cite{hochbaum2018adjacency}: we focus on extracting systematic defect patterns (i.e., diagnostic analysis), rather than yield prediction (i.e., prognostic analysis). This distinction drives a fundamental departure in what constitutes a cluster: In \cite{hochbaum2018adjacency}, they define a cluster as a group of chips with homogeneous yield level, while our approach defines a cluster based on its chips' membership in either the set of systematic or random defect pattern clusters. AC is closely tied to Markov Random Field (MRF) models which have been successfully applied in image segmentation and restoration \cite{rue2005gaussian, geman1986markov}, spatial clustering \cite{hansen1997monitoring}, and wafer thickness variation analysis \cite{bao2014hierarchical}.

We couple the proposed spatial filtering method with a mixture model for spatial clustering. Based on the numerical experiments with real-world data, we show that our approach outperforms the state-of-the-art method in \cite{kim2018detection} by at least $46$\% in terms of internal cluster validation quality (i.e. validation without external information about class labels), and about $\hspace{-0.5mm}5$\% in terms of Normalized Mutual Information\textemdash an external cluster validation metric which makes use of externally provided class labels. The mixture model used for spatial clustering is the same as that used by \cite{kim2018detection} so that the improvements in the final clustering quality are solely attributed to our proposed filtering method. Interestingly, the margin of improvement appears to be a function of the defect pattern complexity, with larger gains achieved for more complex-shaped patterns. 

In summary, our main contribution is to propose a graph-theoretic spatial filtering method which effectively distinguishes the random noises that are prevalent in wafer bin maps from systematic patterns, which are often attributed to well-studied root causes. There is an overwhelming evidence in the distant and recent literature that a poorly designed filtering method can have substantial detrimental impacts on the SPR performance in wafer bin maps, even with the emergence of powerful deep learning-based approaches \cite{yu2019wafer,kong2020semi,tsai2020light,tello2018deep}. Unlike existing filtering and pre-processing methods, our approach fully leverages the spatial dependence information (via its graph-theoretic structure), and is not vulnerable to arbitrary parameter selections (e.g. filter size or length thresholds) which can be wafer-specific, thus requiring continuous intervention of domain experts. Furthermore, our method is shown to have a desirable combinatorial structure which can be solved in polynomial time by a minimum-cut algorithm. When coupled with a spatial clustering approach (in this paper, a mixture model), substantial improvements in SPR performance are detected, owing to the abovementioned advantages. In principle, we could replace this mixture model by any clustering or classification approach, thus emphasizing the generality and substantial benefit brought about by our proposed spatial filtering method.

We conclude this Section by describing the organization of this paper. In Section \ref{sec:approach}, we elucidate the building blocks of our proposed approach, which comprises the details of the AC approach to filter the wafer maps, coupled with an advanced mixture model to further group the AC-filtered results into one or more sub-clusters corresponding to different systematic defect patterns. Section \ref{sec:application} presents our case study which details the analysis of {twelve} real-world wafer bin maps exhibiting complex multi-type defect patterns. Section \ref{sec:conclus} concludes this paper and highlights future research directions. 

\section{Our Approach}\label{sec:approach}
We represent a wafer map with $n$ chips by $(d_1, d_2, \dots, d_n)$, where $d_i \in \mathbb{N}$ is the number of defects on chip $i$. The locations of chips can be modeled as a graph $G = (V,E)$ where nodes denote chips and the edges define the neighborhood relationship, i.e., we have an edge $[i,j]\in E$ when chip $i$ and chip $j$ are adjacent to each other on the wafer map. According to the neighborhood system, each chip can have at most four neighbors (rook-move neighborhood) or eight neighbors (king-move neighborhood). 

Our SPR framework consists of two stages: a spatial filtering stage, and a spatial clustering stage, both of which are clustering tasks, yet they serve different purposes. In the first stage, namely spatial filtering, AC partitions the wafer map into two clusters, such that one of them only includes those chips that form systematic defect patterns. As a result, we are able to separate the systematic patterns (those caused by assignable root causes) from the random patterns (artifacts of random process variation). In the second stage, the AC-filtered results are further partitioned into one or more sub-clusters using a mixture model called the infinite warped mixture model (iWMM). Each sub-cluster corresponds to a type of systematic defect pattern, e.g., a center or zone, as shown in Figure \ref{fig:mot1}. 

\subsection{Adjacency-Clustering for Spatial Filtering} \label{subsec:ac}
We sketch here the AC model from \cite{hochbaum2018adjacency} and present how it can be adapted into the mixed-type SPR problem. The AC model aims to partition the set of chips into clusters such that chips belonging to the same cluster behave similarly and tend to be adjacent to each other. This AC concept is motivated by the spatial dependence among adjacent chips, which aligns well with the concept of the systematic defect patterns on a wafer map where defective chips tend to spatially cluster. In the case of binary defect data (i.e., $d_i\in\{0,1\}$), the AC model will find two clusters: the first cluster corresponds to the set of systematic defect patterns, while the other corresponds to random defect patterns. 

The clustering decisions are cluster labels $x_i$ for $i\in V$. Chips with the same label form a single cluster. The objective function of AC includes a deviation cost function and a separation cost function. The deviation cost function measures how $x_i$ deviates from the observed value $d_i$, while the separation cost function captures the difference in assigned labels of adjacent chips. Let $f_i(x_i, d_i)$ denote the deviation cost functions associated with node $i\in V$ and $g_{ij}(x_i - x_j)$ denote the separation cost functions for edge $[i,j]\in E$. AC can be formulated as the following integer program:
\begin{align}
\tag{AC}
\min~&\sum_{i\in V}f_i(x_i, d_i)+
\sum_{[i,j]\in E} g_{ij}(x_i-x_j)\\
s.t.~~  &x_i\in X \quad\forall~i\in V,
\end{align}
where $X$ is the set of allowable labels of each chip. In our application, we have $X = \{0,1\}$.

The clustering results depend on the trade-off between the two cost functions. When the separation function values are relatively larger than the deviation function values, the resulting clusters will be more contiguous (the spatial smoothing effect is more significant). If the separation costs are too large, the whole wafer map would be forced to have the same label so the separation cost is minimized. On the other hand, when the separation costs are small, the assigned labels will be close to the original observational values and the spatial filtering effect is less notable in the clustering result. The AC model has a statistics foundation from Markov random fields (MRF) wherein solving for $x_i$ is finding the maximum a posterior (MAP) estimate of the degradation model with an MRF prior \cite{geman1986markov}. Different forms of separation and deviation functions reflect different distributional assumptions of MRF. 

The neighborhood system also plays an important role in the AC results. When the rook-move neighborhood structure is assumed, the clustering only looks for defect patterns that grow horizontally or vertically. By contrast, with the king-move structure, the clustering will identify defect patterns that exhibit more complex shapes such as ring and donut patterns. Therefore, the king-move structure can work better for complicated clustering tasks, as those prevalent in mixed-type defect detection (See Figure \ref{fig:motiv1}(c) for an example).

When both $d_i$ and $x_i$ are binary ($X=\{0,1\}$), as in our SPR application, then the AC model reduces to the problem called minimum s-excess \cite{hochbaum2001efficient}:
\begin{align}
\tag{AC-BIN}
\min \ & \sum_{i\in V}w_i x_i + \sum_{[i,j]\in E} u_{ij}z_{ij},  \\
s.t. \ & z_{ij} \geq x_i - x_j \ \forall \ [i,j] \in E    \label{const_z},\\
 \ & z_{ij} \geq x_j - x_i \ \forall \ [i,j] \in E   ,\\
       & x_i \in \{0,1\} \ \forall\ i\in V,\ z_{ij} \in \{0,1\} \ \forall \ [i,j] \in E. \label{binary}
\end{align}
where $z_{ij} = |x_i - x_j|\in\{0,1\}$ indicates the difference in the label values of chip $i$ and $j$, while $w_i$ is the deviation cost of chip $i$, and $u_{ij}$ is the separation cost associated with the pair of chips. More specifically, $w_i=f_i(1, 0) >0 $ for chips with $d_i=0$ and $w_i = -f_i(0,1)<0$ for chips with $d_i = 1$: (1) when $d_i = 0$, we will incur a penalty of $f_i(1,0)$ for labeling $x_i = 1$ and zero penalty otherwise, so the associated deviation cost is $f_i(1,0)\cdot x_i$ and $w_i = f_i(1,0)$; (2) when $d_i = 1$, we will incur a penalty of $f_i(0,1)$ when assigning $x_i =  0$ and zero penalty otherwise, hence the associated deviation cost is $f_i(0,1)\cdot (1 - x_i)$; After dropping the constant, we get $w_i = - f_i(0,1)$. And $u_{ij} = g_{ij}(1)>0$ for all pairs (the separation cost is 0 if $z_{ij}=0$). The reduction to the minimum s-excess is attainable for any deviation and separation functions.

The minimum s-excess problem can be solved in polynomial time with a minimum-cut algorithm applied to an appropriately defined graph \cite{hochbaum2001efficient}:
\begin{prop} \textit{The adjacency-clustering model with binary label values (AC-BIN) can be solved in polynomial time.}
\end{prop}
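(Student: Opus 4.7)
The plan is to establish the result by exhibiting a polynomial-time reduction from AC-BIN to a minimum s-t cut problem on an auxiliary directed graph. The starting point is the observation already made in the excerpt that AC-BIN is an instance of the minimum s-excess problem of \cite{hochbaum2001efficient}, so it suffices to show how to solve the latter via min-cut.

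First I would simplify the formulation. Because $u_{ij}>0$ and the objective is being minimized, the two constraints $z_{ij}\ge x_i-x_j$ and $z_{ij}\ge x_j-x_i$ together with binarity force $z_{ij}=|x_i-x_j|$ at any optimum, so the problem reduces to minimizing $\sum_{i\in V} w_i x_i + \sum_{[i,j]\in E} u_{ij}\,|x_i-x_j|$ over $x\in\{0,1\}^{|V|}$. Partition the nodes into $V^+=\{i:w_i\ge 0\}$ and $V^-=\{i:w_i<0\}$. Rewriting $w_i x_i$ for $i\in V^-$ as $-|w_i|(1-x_i)$ plus a constant yields a pseudo-Boolean function whose linear terms are all non-negative and whose pairwise terms are of the \emph{submodular} form $u_{ij}|x_i-x_j|$, exactly the class that is known to admit a min-cut representation.

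Next I would construct the s-t cut graph. Introduce a source $s$ and a sink $t$; for each $i\in V^+$ add the arc $(i,t)$ with capacity $w_i$; for each $i\in V^-$ add the arc $(s,i)$ with capacity $|w_i|$; and for each undirected edge $[i,j]\in E$ add a pair of antiparallel arcs $(i,j)$ and $(j,i)$, each with capacity $u_{ij}$. Given any s-t cut $(S\cup\{s\},\, T\cup\{t\})$, set $x_i=1$ iff $i\in S$. A direct accounting of which arcs cross the cut shows that the cut capacity equals the AC-BIN objective up to an additive constant: arcs from $V^+\cap S$ to $t$ contribute $w_i$ precisely when $x_i=1$, arcs from $s$ to $V^-\cap T$ contribute $|w_i|$ precisely when $x_i=0$ (matching $w_i x_i$ after the shift), and each edge $[i,j]$ contributes $u_{ij}$ precisely when $x_i\ne x_j$. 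Hence a minimum cut yields a minimizer of the AC-BIN objective.

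Finally, since any standard max-flow algorithm (push-relabel, for example) solves min-cut on a graph with $O(|V|+|E|)$ arcs and polynomially-bounded capacities in polynomial time, AC-BIN inherits polynomial-time solvability. The main conceptual obstacle is the careful bookkeeping of signs and the additive constant needed to verify that cut capacity equals objective value edge-by-edge; once that identification is made, the polynomial-time claim follows immediately from classical max-flow/min-cut theory.
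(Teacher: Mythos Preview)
Your proposal is correct and follows essentially the same route as the paper: the reduction of AC-BIN to a minimum $s$--$t$ cut on the auxiliary graph with arcs $(i,t)$ of capacity $w_i$ for $w_i>0$, arcs $(s,i)$ of capacity $-w_i$ for $w_i<0$, and antiparallel arcs of capacity $u_{ij}$ for each edge, is precisely the construction the paper invokes (via \cite{hochbaum2001efficient}) and spells out immediately after the proposition. The only difference is that the paper's formal proof also offers, as an independent first argument, the observation that the constraint matrix of AC-BIN is totally unimodular, so the LP relaxation is integral and hence polynomially solvable; your min-cut argument alone already suffices, but you may want to mention the TU route as a one-line alternative.
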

\begin{proof}
First, we can verify that the constraint matrix of (AC-BIN) is totally unimodular and therefore, (AC-BIN) can be solved in polynomial time. Second, the constraints also correspond to the dual of the minimum cost network flow problem. Then finding the solution to (AC-BIN) is equivalent to finding the minimum cut on a graph adapted from $G$ \cite{hochbaum2001efficient}.
\end{proof}
The algorithm constructs a graph $G_{st} = (V\cup\{s,t\}, A_{st})$ as follows: First we add to $G$ a source node $s$ and a sink node $t$, and each edge $[i,j]$ is replaced by two arcs $(i,j)$ and $(j,i)$ with the same capacity of $u_{ij}$. For each node $i\in V$ with a positive $w_i$, we add an arc of capacity $w_i$ from the node to the sink. For each node $j\in V$ with a negative weight $w_j$, we add an arc of capacity $-w_j$ from the source. Then the defective cluster (the set of chips with $x_i=1$) is the source set of a minimum cut in $G_{st}$. The computational results indicate that this algorithm can solve instances with thousands of chips within seconds, which facilitates its real-time adoption in practice.

If $d_i$ and $x_i$ take more than two values, the AC model can still be solved in polynomial time for convex deviation and separation functions. Specifically, for ``bilinear" separation cost functions (i.e., $g(x_i - x_j) = u_{ij}\cdot (x_i - x_j)$ if $x_i \geq x_j$ and $u_{ji} \cdot (x_j - x_i)$ otherwise) and any convex deviation function, \cite{hochbaum2001efficient} devised an algorithm that solves the problem in the running time of a single minimum cut (and the running time of finding the minima of the convex deviation functions). This time complexity was also shown by \cite{hochbaum2001efficient} to be the best that can be achieved.  When the separation cost function is not ``bilinear" but convex, the Lagrangian relaxation technique can be applied for the polynomial time algorithm \cite{ahuja2003solving} . 

After solving AC with binary label values, each chip is assigned a new label. The chips with a label value of one form a cluster that contains systematic defect patterns, while the chips with a label value of zero are filtered out. In other words, the original defects recorded on the zero-label chips are treated as random defect patterns of nonassignable causes, to be marked off by the spatial filtering stage and thus no longer deemed defects in the subsequent spatial clustering stage. The spatial filtering result depends on the relative magnitude of the separation costs and deviation costs (the relative differences between $w_i$ and $u_{ij}$ in AC-BIN). As we show in Section III, our numerical analysis suggests that there is a set of parameter values that yields consistently high quality SPR results. 

\subsection{Infinite Warping Mixture Model for Spatial Clutering} \label{sec:iwmm}
Given the AC-filtered results, we apply iWMM \cite{iwata2012warped} to group the resulting systematic patterns into sub-clusters pertaining to distinct types of defect patterns. Before we elaborate on the details of iWMM, we first briefly discuss the motivation of using it in our setting. iWMM was first proposed by \cite{iwata2012warped} and then adopted by \cite{kim2018detection} to spatially cluster the wafer maps that are filtered via CPF. In our approach, we keep the iWMM as our spatial clustering method, because iWMM is a highly potent multi-class clustering method and lends itself well to the SPR problem (more about this in the following). Additionally, by using the same spatial clustering method as that used by \cite{kim2018detection}, we ensure that the improvements in SPR quality are mainly attributed to our proposed filtering method. \par

The benefit of using iWMM in spatial clustering of wafer defect patterns is two-fold. First, in iWMM, the number of sub-clusters corresponding to the number of defect types on a wafer is estimated rather than specified \textit{a priori}\textemdash a common shortfall of most clustering methods. Second, and more importantly, defects in wafer maps tend to have non-Gaussian shaped patterns (such as donut and ring). This invalidates the assumptions of many classical model-based clustering approaches that assume the clusters themselves follow a certain parametric distribution (most commonly a Gaussian). The iWMM method relaxes this assumption by making the parametric distribution assumption on the clusters in a latent space, which are related to the original complex-shaped clusters through a non-linear transformation called a warping function. Through this warping, complex non-Gaussian-like shapes in the observed space can be represented by simple Gaussian-like shapes in the latent space. Clustering is then performed in the latent space using a model-based clustering technique (e.g. Gaussian mixture). Figure \ref{fig:ex_iwmm} shows an example of how iWMM works within our framework.

\begin{figure*}
\centering
    \includegraphics[width = 1\linewidth, trim = {0 0.4cm 0 0cm}]{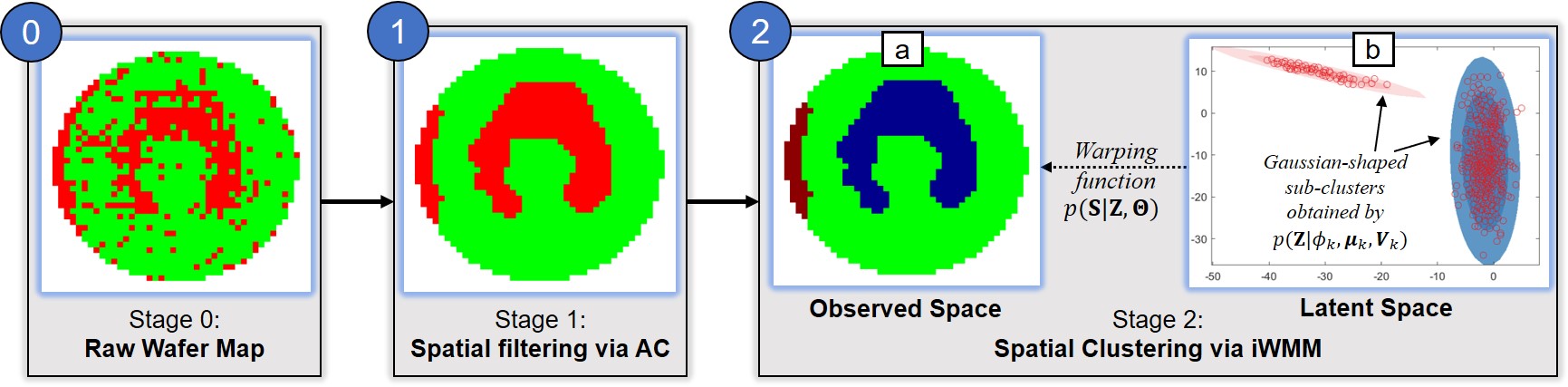}
    \caption{Our SPR consists of spatial filtering via the AC method (Stage 1) and iWMM for spatial clustering (Stage 2). iWMM assumes the non-Gaussian-shaped sub-clusters in the observed space (2a) are obtained by warping Gaussian-like sub-clusters in the latent space (2b).
    }
    \label{fig:ex_iwmm}
\end{figure*}

We briefly describe the key details of the iWMM method in our problem setting and interested readers can refer to the Appendix for more details. As evident in Figure \ref{fig:ex_iwmm}, two building blocks constitute the essence of iWMM: (1) a warping function to match the observed spatial locations of the AC-filtered results with a set of latent spatial coordinates in the latent space, and (2) a model-based clustering method which determines the clustering assignments in the latent space. The authors in \cite{iwata2012warped} propose to use a Gaussian process latent variable model (GPLVM) \cite{lawrence2004gaussian} as a warping function, and an infinite Gaussian mixture model (iGMM) as a model-based clustering method. We briefly review both in the sequel. \par 

Using the notation from Section \ref{sec:approach}, we denote by $\mathbf{S} = [\mathbf{s}_1, ..., \mathbf{s}_n]^T$ the set of spatial locations of the defective chips in the AC-filtered results, i.e. for which $x_i = 1$, where $n = \sum_{i} x_i$, and $\mathbf{s}_i \in \mathbb{R}^2$. The set $\mathbf{S}$ in the observed space corresponds to a set of latent coordinates in the latent space denoted by $\mathbf{Z} = [\mathbf{z}_1, ..., \mathbf{z}_n]^T$, where $\mathbf{z}_i \in \mathbb{R}^2$. The ultimate goal of iWMM is to find a vector of assignments in the latent space, denoted as $\mathbf{A} = [a_1, ..., a_n]^T$, where $a_i \in \mathbb{Z}^+$ denotes the membership of the $i$th chip to a particular sub-cluster. \par 

GPLVM is used to map (or warp) the transformed spatial locations $\mathbf{Z}$, which are assumed to follow a Gaussian distribution, into the observable space where $\mathbf{S}$ can have a non-Gaussian distribution. For
GPLVM, the conditional probability of $\mathbf{S}$ given $\mathbf{Z}$ is expressed as in Eq.~(\ref{eq:gplvm}) \cite{lawrence2004gaussian}. \begin{equation}
\label{eq:gplvm}
    p(\mathbf{S}|\mathbf{Z}, \pmb{\Theta}) = (2\pi)^{-n} |\pmb{\Sigma}|^{-1} \exp\bigg(-\frac{1}{2} \text{tr}(\mathbf{S}^T \pmb{\Sigma}^{-1} \mathbf{S})\bigg),
\end{equation}
where tr($\cdot$) is the trace function, $|\cdot|$ is the determinant operation, and $\pmb{\Sigma}$ is an $n \times n$ covariance matrix whose $i$th and $j$th entry holds the covariance between a pair of observations $\mathbf{z}_i$ and $\mathbf{z}_j$. To determine the entries of $\pmb{\Sigma}$, a stationary parametric covariance function $C(\cdot)$ is selected, which depends on the lag between a pair of inputs through a set of hyperparameters $\pmb{\Theta}$. A popular choice for $C(\cdot)$ is the so-called squared exponential covariance \cite{williams2006gaussian}, which is employed in this work. \par 

Once the warping function is established, iGMM is used for spatial clustering in the latent space by assuming that the $k$th mixture component (or sub-cluster) in the latent space follows a Gaussian distribution with a mean and precision matrix, denoted by $\pmb{\mu}_k$ and $\mathbf{V}_k$, respectively. Each mixture component is associated with a mixture weight, $\phi_k$. The mathematical expression for iGMM is presented in Eq.~(\ref{eq:igmm}). 
\begin{equation}
    \label{eq:igmm}
    p(\mathbf{z}|\phi_k, \pmb{\mu}_k, \pmb{V}_k) = \sum_{k=1}^{\infty} \phi_k \mathcal{N}(\mathbf{z}|\pmb{\mu}_k, \mathbf{V}_k^{-1}) 
\end{equation}

A detailed procedure to fit the iWMM to a set of observed spatial locations $\mathbf{S}$ is proposed in \cite{iwata2012warped}, where the latent coordinates $\mathbf{Z}$, assignments $\mathbf{A}$, as well as remaining parameters are inferred through a Markov Chain Monte Carlo (MCMC)-based procedure. The implementation codes for iWMM have been made publicly available \cite{codes} and we use them for our numerical analysis in Section \ref{sec:application}. 

\section{Application to Real-World Wafer Map Data} \label{sec:application}
In this section, we evaluate the performance of our proposed SPR approach on real-world wafer bin maps. 
We then derive key insights about its performance relative to a state-of-the-art approach using widely recognized clustering quality metrics.

\subsection{Data Description} 
We extract {twelve} wafer maps from a public dataset that is widely cited in the semiconductor manufacturing community \cite{data, wu2014wafer}. While the original dataset contains a large number of wafer maps, we select {twelve} wafers so as to (1) reflect different mixed-type defect patterns of various complexity, and (2) to resemble as close as possible to the six wafer maps analyzed by \cite{kim2018detection} (which we do not have access to) in order to provide a fair comparison of our proposed approach relative to the state-of-art approach in the literature. Figure \ref{fig:wafers} displays the {twelve} chosen wafer maps, where the red and green squares depict the defective and functional chips, respectively. 

\begin{figure}[h]
    \centering
        \includegraphics[width = 1\linewidth, trim = {2.25cm 3cm 2.25cm 3.5cm}]{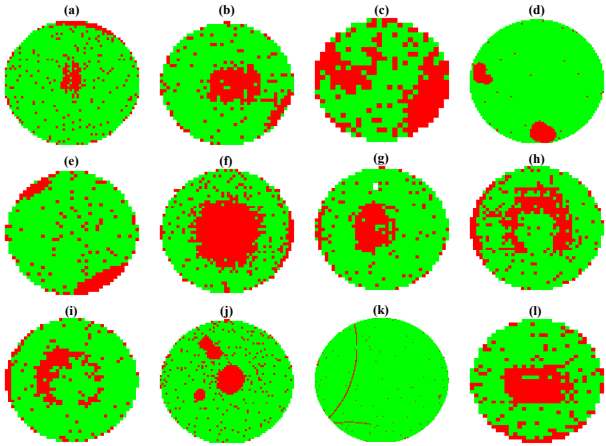}
    \caption{Twelve wafer maps with mixed-type defect patterns. (a), (f), (g), and (l): wafer maps with center and partial ring defects. (b) and (j): wafer maps with center and zone defects. (c), (d), and (e): wafer maps with two zone defects, (h) and (i): wafer maps with donut and partial ring defects, (k): a wafer map with two disconnected scratch defects.}
    \label{fig:wafers}
\end{figure}

\subsection{Results and Discussion}
Hereinafter, we denote our proposed approach as AC-iWMM where adjacency-clustering for spatial filtering is coupled with the infinite warping mixture model for spatial clustering. The benchmark in comparison is the state-of-the-art filtering approach in \cite{kim2018detection}, which is denoted hereinafter as CPF-iWMM where the connected path filtering (CPF) algorithm for spatial filtering is followed by iWMM for spatial clustering. Therefore, the fundamental difference between our approach and the benchmark lies in the spatial filtering stage, for which the impact on the quality of SPR is shown to be instrumental. \par   

We test the AC model with a king-move neighborhood system, and standardize $f_i(0,1) = f_i(1,0) = 1$ (i.e., $|w_i| = 1$) for all $i\in V$, and set $u_{ij} = u$ for all $[i,j] \in E$. The value of $u$ thus controls the spatial filtering level. In theory, we can choose the value of $u$ through a cross validation procedure as described by \cite{hochbaum2018adjacency}. As discussed in subsection \ref{subsec:ac}, having a too large or too small value of $u$ is not ideal for the spatial filtering. {We observe that for almost all defect patterns, the choice of $u=0.5$ achieves a sensible trade-off between the deviation and separation costs, and consistently yields superior performance in filtering various defect pattern combinations.}  
This is in contrast to CPF for which there does not appear to be a value for its main parameter $M$ that works universally well for different defect types (findings to be discussed in the sequel). 
We implement the CPF algorithm following the description of the method by \cite{kim2018detection}, while for iWMM, we adapt the codes available in \cite{codes}. Figure \ref{fig:result1} shows the results of the AC-iWMM approach for a subset of the wafers, starting from the raw map, to AC-filtered results, to the clustering results in both the original and latent spaces. The visual results for all wafers are shown in Figure \ref{fig:a2} in the Appendix. 

\begin{figure}[h]
    \centering
       \includegraphics[width = 1.2\linewidth, trim = {4cm 4.5cm 0cm 4cm}]{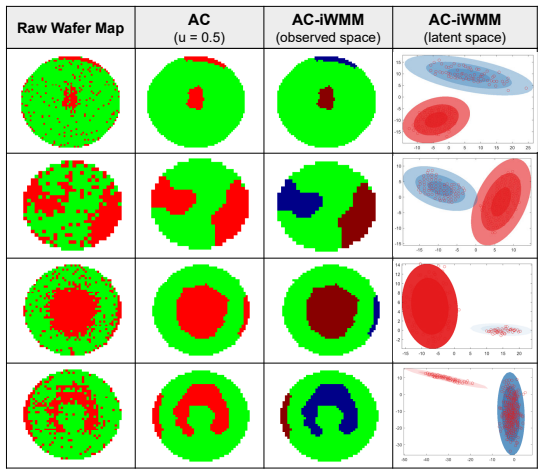}
    \caption{Results from a subset of wafers, starting from the raw maps (first column), to AC filtering using $u = 0.5$, and then to iWMM clustering, in both the original and latent spaces (third and fourth columns, respectively).}
    \label{fig:result1}
\end{figure}

\subsubsection{Visual Comparisons} 
Before we present the quantitative results, we first draw some insights based on visual comparisons between our approach and the benchmark, CPF-iWMM. Figures \ref{fig:result2} and \ref{fig:result3} show the results of both approaches on two wafer maps. The first wafer map, illustrated in Figure \ref{fig:result2}, hosts donut and partial ring defect patterns. By virtue of AC filtering, iWMM is able to distinguish the two types of defects into two separate sub-clusters that are spatially distinct. This is achieved by correctly smoothing out the random noises between the two patterns with the use of local neighborhood information. In contrast, CPF  mistakenly identifies some chips located in proximity to both sub-clusters as a separate sub-cluster, as it overlooks local neighborhood information. We note that iWMM was run at the same parameter settings for AC-iWMM and CFP-iWMM so the difference between the two sets of results is solely attributed to the spatial filtering approach.\par 

\begin{figure}[h]
    \centering
       \includegraphics[width = 1\linewidth, trim = {0cm 0.5cm 0cm 0.5cm}]{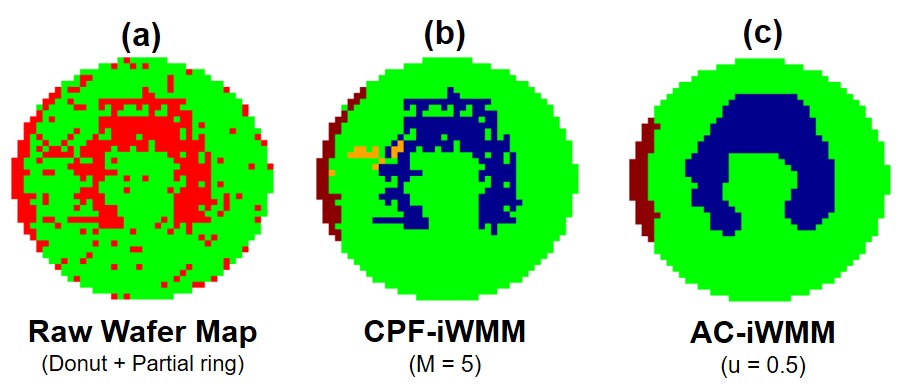}
    \caption{Visual comparison of CPF-iWMM (b) and AC-iWMM (c) on a wafer with donut and partial ring defects. Unlike AC, CPF fails to separate the two sets of defects, causing iWMM to mistakenly flag a separate sub-cluster.     }
    \label{fig:result2}
\end{figure}

Another illustrative example is shown in Figure~\ref{fig:result3}, in which the wafer map exhibits two zone defects. Again, CPF fails to separate the two sets of defective chips, causing iWMM to mistakenly flag a new separate sub-cluster. This is in contrast to AC-iWMM which yields a clear distinction between the two sub-clusters. We note that this problem cannot be alleviated by simply tuning the value of $M$ because the set of chips that are mistakenly flagged by CPF-iWMM are connected to one of the true sub-clusters, and hence, CPF will always treat it as one connected path. AC-iWMM does not keep this set of chips after AC filtering because the neighborhood information is utilized to smooth them out, reducing potential mishaps in the subsequent iWMM clustering stage. 

\begin{figure}[h]
    \centering
       \includegraphics[width = 1\linewidth, trim = {0cm 0.5cm 0cm 0.75cm}]{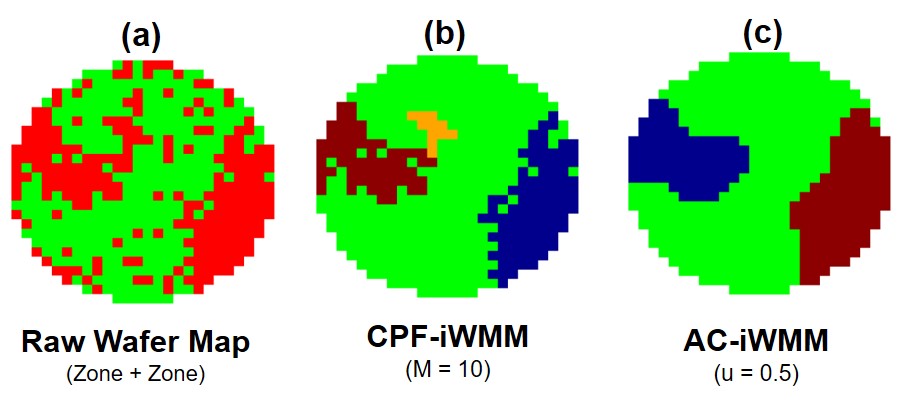}
    \caption{Visual comparison of CPF-iWMM (b) and AC-iWMM (c) on a wafer with two zone defects. In contrast to AC, CPF fails to separate the two sets of defective chips, causing iWMM to mistakenly flag a separate sub-cluster. }
    \label{fig:result3}
\end{figure}

\subsubsection{Quantitative Comparisons:} \label{sec:quant} The clustering results obtained by both approaches are then evaluated using two sets of performance metrics that are known in the SPR literature as \textit{internal} and \textit{external} indices \cite{arbelaitz2013extensive}. Assuming $\hat{\mathbf{A}} = [\hat{a}_1, ..., \hat{a}_n]^T$ and $\mathbf{A} = [a_1, ..., a_n]^T$ are the sets of predicted and true cluster assignments, respectively, internal indices assess SPR quality when the underlying ground truth is not available, that is, without access to the set $\mathbf{A}$. External indices, on the other hand, make use of $\mathbf{A}$ to validate the estimated SPR results. \par 

Let us denote by $\mathbf{G}$ the center of all coordinates in $\mathbf{S}$. Similarly, $\mathbf{G}^k$ denotes the center of the coordinates in $\mathbf{S}^k = \{\mathbf{s}_i^k\}_{i:\hat{a}_i = k}$, i.e., the coordinates of the observations assigned to the $k$th sub-cluster (for which $\hat{a}_i = k$). Two widely recognized internal indices are the Calinski-Harabasz (CH) index and the Generalized Dunn index. The Calinski-Harabasz (CH) index calculates a weighted ratio of between-cluster and within-cluster dispersion  and is defined  in Eq.~(\ref{eq:ch}) \cite{calinski1974dendrite}. By definition, a higher value for the CH index indicates a better performance. 
\begin{equation}
\small
\label{eq:ch}
    \text{CH}(\mathbf{S},\mathbf{S}^1,...,\mathbf{S}^{\hat{K}}) = \frac{n-\hat{K}}{\hat{K}-1} \frac{\sum_{k=1}^{\hat{K}} n_k ||\mathbf{G}^k - \mathbf{G}||^2}{\sum_{k=1}^{\hat{K}}\sum_{i:\hat{a}_i=k}||\mathbf{s}_i^{k}-\mathbf{G}^k||^2},
\end{equation}
where $||\cdot||$ is the Euclidean norm, and $\hat{K}$ is the predicted number of sub-clusters. \par 

The Generalized Dunn Index (GDI) defines a similar ratio, as expressed in Eq.~(\ref{eq:gdi}) \cite{bezdek1998some}. A higher value for GDI indicates a better performance.
\begin{equation}
\begin{gathered}
\text{GDI}(\mathbf{S}^1,...,\mathbf{S}^{\hat{K}}) = \\
\frac{\underset{k\neq k'}{\min} \hspace{0.5mm} \frac{1}{n_k+n_{k'}}\big(\sum_{i:\hat{a}_i=k}||\mathbf{s}_i^k-\mathbf{G}^k||+\sum_{j:\hat{a}_j=k'}||\mathbf{s}_j^{k'}-\mathbf{G}^{k'}||\big)}{\underset{k}{\max} \underset{i\neq j:\hat{a}_i=\hat{a}_j=k}{\max}\hspace{1mm} ||\mathbf{s}_i^k - \mathbf{s}_j^k||}.
\end{gathered}
\label{eq:gdi}
\end{equation}

In addition to these internal indices, we test the performance of our approach on a set of widely recognized external indices. The motivation is that, in practice, domain experts can provide the ground truth for a set of testing wafer data which can be used to assess the performance of the competing SPR approaches. Since our dataset does not have the ``ground truth,'' or in other words the set $\mathbf{A}$, we reconstruct the ground truth by applying a pattern reconstruction technique which iterates over every pixel of the raw map and updates its value using a weighted sum of its surrounding pixels to generate an output image \cite{gonzalez2002digital}. For our application, we used a $3 \times 3$ neighborhood system with $\frac{4}{9}$ weight. We also observed that this weight selection had minimal impacts on the final results. \par

Three prevalent external indices are the Rand index (RI), adjusted Rand index (ARI), and normalized mutual information (NMI). The first two metrics are based on counting pairs of observations on which the predicted clustering results agree or disagree with the true clustering assignment. Specifically, let us assume that $K$ and $\hat{K}$ are the true and predicted number of sub-clusters, respectively, and that $n_{ij}$ denote the number of observations that are common in the $i$th sub-cluster of $\mathbf{A}$ and the $j$th sub-cluster of $\hat{\mathbf{A}}$. Now, let us define $\gamma$ as the number of pairs pertaining to the same sub-cluster in $\mathbf{A}$ and to the same sub-cluster in $\hat{\mathbf{A}}$, while $\beta$, on the other hand, denotes the number of pairs pertaining to different sub-clusters in $\mathbf{A}$ and different sub-clusters in $\hat{\mathbf{A}}$. With the above notations, RI, first introduced in \cite{rand1971objective}, can be defined as:
\begin{equation}
    \text{RI}(\mathbf{A}, \hat{\mathbf{A}}) = \frac{\gamma + \beta}{\binom{2}{n}} \in [0,1],
    \label{eq:ri}
\end{equation}
where in case of perfect clustering, RI = 1, and in general, the higher its value, the better. \par 

 The second metric is the adjusted Rand index, or in short ARI, and is computed as follows:
\begin{equation}
    \small
    \label{eq:ari}
    \text{ARI}(\mathbf{A},\hat{\mathbf{A}}) =
    \frac{\binom{2}{n}(\gamma + \zeta)-[(\gamma+\beta)(\gamma+\tau)+(\tau+\zeta)(\beta+\zeta)]}{\binom{2}{n}^2 - [(\gamma+\beta)(\gamma+\tau) + (\tau+\zeta)(\beta+\zeta)]},   
\end{equation}
where $\tau$ denotes the number of pairs pertaining to the same sub-cluster in $\mathbf{A}$ and to different sub-clusters in $\hat{\mathbf{A}}$, while, $\zeta$ denotes the number of pairs pertaining to different sub-clusters in $\mathbf{A}$ and to the same sub-cluster in $\hat{\mathbf{A}}$. Similar to RI, a higher value of ARI indicates better performance. \par 

The third external metric is NMI \cite{vinh2010information}, which is an information-theoretic metric that measures the amount of information that $\mathbf{A}$ and $\hat{\mathbf{A}}$ share, and is expressed as in Eq.~(\ref{eq:nmi}). NMI ranges between $0$ and $1$, with higher values indicating better performance.
\begin{equation}
    NMI(\mathbf{A}, \hat{\mathbf{A}}) = \frac{I(\mathbf{A}, \hat{\mathbf{A}})}{H(\mathbf{A}, \hat{\mathbf{A}})} \in [0,1],
    \vspace{-.5cm}
\end{equation}
such that
\begin{equation}
    \label{eq:nmi}
    \begin{split}
        I(\mathbf{A}, \hat{\mathbf{A}}) &= \sum_{i=1}^{K} \sum_{j=1}^{\hat{K}} \frac{n_{ij}}{n} \log\bigg(\frac{n_{ij}/{n}}{(\sum_{j=1}^{\hat{K}} n_{ij}) (\sum_{i=1}^{K} n_{ij})/n^{2}}\bigg)\\
        H(\mathbf{A}, \hat{\mathbf{A}}) &= - \sum_{i=1}^{K} \sum_{j=1}^{\hat{K}} \frac{n_{ij}}{n} \log\bigg(\frac{n_{ij}/{n}}{(\sum_{i=1}^{K} n_{ij})/n}\bigg).
    \end{split}
\end{equation}

\begin{table}
     \caption{{Internal indices (CH and GDI) for all $12$ wafers. Bold-faced values indicate best performance. {*} denotes $u = 0.40$.} \vspace{-2mm}}
    \setlength{\tabcolsep}{1pt}
    \centering
    \begin{tabular}{|c|c|c|c||c|c|c|}
    \hline
    & \multicolumn{3}{c||}{CH ($\uparrow$)} & \multicolumn{3}{c|}{GDI ($\uparrow$)} \\
    \hline
    \vtop{\hbox{\strut \scriptsize Wafer}\hbox{\strut   \hspace{1.5mm} \small \#}} & \vtop{\hbox{\strut \scriptsize CPF-iWMM}\hbox{\strut   \hspace{3mm} \scriptsize (M=5)}} &\vtop{\hbox{\strut \scriptsize CPF-iWMM}\hbox{\strut   \hspace{2mm} \scriptsize (M=10)}}& \vtop{\hbox{\strut \scriptsize AC-iWMM}\hbox{\strut   \hspace{1mm} \scriptsize (u=0.50)}} & \vtop{\hbox{\strut \scriptsize CPF-iWMM}\hbox{\strut   \hspace{3mm} \scriptsize (M=5)}} &\vtop{\hbox{\strut \scriptsize CPF-iWMM}\hbox{\strut   \hspace{2mm} \scriptsize (M=10)}}& \vtop{\hbox{\strut \scriptsize AC-iWMM}\hbox{\strut   \hspace{1mm} \scriptsize (u=0.50)}} \\ 
    \hline
    1   & 539.0  &  539.0  &   \textbf{703.0}  & \textbf{.195}    &  \textbf{.195}   &   \textbf{.195} \\
    \hline
    2  & 99.0  &  99.0 &   \textbf{304.4} & .045   &   .045   &   \textbf{.281}   \\
    \hline
     3 &   517.4 &  307.3  &   \textbf{593.4}  &   \textbf{.229}    &  .176    &    {.223} \\
    \hline
    4 &   {7800.5}  &   {7800.5}  &   \textbf{8266.1}    &   \textbf{.303}  &   \textbf{.303}   &   .300   \\
    \hline
    5 &   1302.5  &   1214.6  &   \textbf{1565.8} &   .210   &  .213    &   \textbf{.215}  \\
    \hline
    6 &   85.1  &   219.3 &   \textbf{254.8}  &   .160  &   .285 &   \textbf{.302}\\
    \hline
     7 &  185.9  &   185.9  &   \textbf{222.8} &   \textbf{.276} & \textbf{.276} & \textbf{.276} \\
    \hline
   8 &  102.3 &   91.8  &   \textbf{148.4} &  .167    & .139 &   \textbf{.260} \\
    \hline
    9 &   114.2  &   200.4  &   \textbf{240.8} &   .208   & .302    &   \textbf{.320}  \\
    \hline
    10 &   455.8  &  341.9  &   \textbf{682.3} &   .130  &  .141    &   \textbf{.228}  \\
    \hline
    \hspace{1mm}11* &  38.8  &   \textbf{137.9} &  54.7   &    .011  &  \textbf{.226}    &   .184  \\
    \hline
        12 &   190.4  &  176.1  &   \textbf{218.0}  &   .236  &  .284  &   \textbf{.285}  \\
    \hline
    
    \end{tabular}
  
    \label{tab:results_int}
\end{table}

\begin{table*} [h]
    \caption{{External indices (RI, ARI, and NMI) for all $12$ wafers. Bold-faced values indicate best performance. * denotes $u = 0.40$.}}
            \vspace{-2mm}
    \setlength{\tabcolsep}{1pt}
    \centering
    \begin{tabular}{|c|c|c|c||c|c|c||c|c|c|}
    \hline
    & \multicolumn{3}{c||}{RI ($\uparrow$)} & \multicolumn{3}{c||}{ARI ($\uparrow$)} & \multicolumn{3}{c|}{NMI ($\uparrow$)}\\
    \hline
    \vtop{\hbox{\strut \footnotesize Wafer}\hbox{\strut   \hspace{1.5mm} \small \#}} & \vtop{\hbox{\strut \footnotesize CPF-iWMM}\hbox{\strut   \hspace{3mm} \footnotesize (M=5)}} &\vtop{\hbox{\strut \footnotesize CPF-iWMM}\hbox{\strut   \hspace{2mm} \footnotesize (M=10)}}& \vtop{\hbox{\strut \footnotesize AC-iWMM}\hbox{\strut   \hspace{1mm} \footnotesize (u=0.50)}} & \vtop{\hbox{\strut \footnotesize CPF-iWMM}\hbox{\strut   \hspace{3mm} \footnotesize (M=5)}} &\vtop{\hbox{\strut \footnotesize CPF-iWMM}\hbox{\strut   \hspace{2mm} \footnotesize (M=10)}}& \vtop{\hbox{\strut \footnotesize AC-iWMM}\hbox{\strut   \hspace{1mm} \footnotesize (u=0.50)}} & \vtop{\hbox{\strut \footnotesize CPF-iWMM}\hbox{\strut   \hspace{3mm} \footnotesize (M=5)}} &\vtop{\hbox{\strut \footnotesize CPF-iWMM}\hbox{\strut   \hspace{2mm} \footnotesize (M=10)}}& \vtop{\hbox{\strut \footnotesize AC-iWMM}\hbox{\strut   \hspace{1mm} \footnotesize (u=0.50)}} \\
    \hline
    1   & .969  &   .969  &   \textbf{.985}  & .938    &  .938    &   \textbf{.970} & .877 & .877  & \textbf{.916}   \\
    2  & .927  &   .927 &   \textbf{.956} & .849    &   .849   &   \textbf{.908}   &   .785    &   .785    &  \textbf{.845} \\
    3 &   .831  &  .830  &   \textbf{.856}  &   .612    &  .607    &    \textbf{.661} &    .660    &   .647    &   \textbf{.686}\\
    4 &   \textbf{.993}  &   \textbf{.993}  &   .990    &   \textbf{.986}  &   \textbf{.986}   &    .979   &  \textbf{.950}   &   \textbf{.950}   &   .936     \\
    5 &   .976  &   .974  &   \textbf{.985} &   .952    &  .948    &   \textbf{.969}   &   .897    &   .893    &   \textbf{.917} \\
    6 &   .900  &   .901 &   \textbf{.943}  &   .772    &  .776    &   \textbf{.869}   &   .751    &   .770    &   \textbf{.841}  \\
    7 &   .940  &   .940  &   \textbf{.967} &   .877    &  .877    &\textbf{.932}  &   .817    &   .817    &  \textbf{.876} \\
    8 &   .919  &   .922  &   \textbf{.939} &   .837    & .842 &   \textbf{.877}   &   .762    &   .778   &   \textbf{.812} \\
    9 &   .872  &   .871  &   \textbf{.923} &   .726    &  .724    &   \textbf{.830}   &   .699    &   .711    &   \textbf{.790} \\
    10 &   .977  &   .973  &   \textbf{.980} &   .954    &  .946    &   \textbf{.961}   &   .888    &   .880    &   \textbf{.906} \\
    \hspace{1mm}11* &  .985 & \textbf{.987}  &   .985   &   .968  &   \textbf{.971}    &   .968   &   .938    &   \textbf{.946} &   .936    \\
    12 &   .894  &   .885  &   \textbf{.931} &   .775    &  .759    &  \textbf{.851}   &   .743    &   .738    &   \textbf{.799} \\
    \hline
    
    \end{tabular}
    \label{tab:results_ext}
\end{table*}

Tables \ref{tab:results_int} and \ref{tab:results_ext} summarize the comparison results, in terms of internal and external metrics, respectively. We have included results of the CPF approach at two different values of the threshold, namely $M = 5$ and $M = 10$. With a lack of a systematic way to select $M$, those two values are selected as representatives of a low and high value, respectively. All internal and external metrics are computed using the statistical programming software $\texttt{R}$. Specifically, values of RI and ARI are computed by using functionalities in the library \texttt{fossil} \cite{vavrek2011fossil}, while NMI is computed by calling the library \texttt{NMI}. All internal indices are computed by using functionalities in the library \texttt{clusterCrit} \cite{desgraupes2013clustering}. \par 

As shown in Tables \ref{tab:results_int} and \ref{tab:results_ext}, we find that, in all wafers and across all metrics, AC-iWMM either outperforms or comes as a close second relative to CPF-iWMM with $M = 5$ or $M = 10$. We also note that the performance of the CPF approach is, in many cases, sensitive to the choice of $M$. As a case in point, varying $M$ from $5$ to $10$ in wafer \#6 changes an internal metric like CH by as much as $158$\%, and an external metric like NMI by up to $3$\%. More importantly, there is not a choice of $M$ that consistently outperforms the other. For instance, we note that a choice of $M = 5$ for wafer \#3 outperforms that of $M = 10$. In contrast, a choice of $M = 10$ for wafer \# 6 renders consistently better results than $M = 5$ across all metrics. This suggests that the choice of $M$ may be wafer-specific and requires expert judgment (as acknowledged in \cite{kim2018detection}). {As opposed to CPF, the choice of $u = 0.5$ for AC is shown to  provide consistently satisfactory performance across all defect combinations, except for the scratch patterns, where a value of $u = 0.4$ worked best. Note that, in principle, CPF is expected to perform considerably well in scratch patterns, since by design, scratch patterns are line defects, which, if longer than a carefully selected threshold (for this wafer, $M = 10$), can be naturally characterized by CPF. Nevertheless, our approach is still able to effectively distinguish the scratch defects as shown in the visual results in Figure \ref{fig:a2} (wafer \#11). We also note how changing $M$ from $10$ to $5$ for this wafer results in a substantial deterioration in performance for CPF, as shown in Tables \ref{tab:results_int} and \ref{tab:results_ext}.}  \par  

\begin{table*}[h]
    \caption{{Improvement (in percentage) of AC-iWMM over CPF-iWMM with $M = 5$, $10$, for all metrics (internal and external) across the $12$ wafers. Gray-coloured cells denote instances where percentage improvement was negative.}}
            \vspace{-2mm}
    \setlength{\tabcolsep}{1pt}
    \centering
    \begin{tabular}{|c|c|c||c|c||c|c||c|c||c|c|}
    \hline
    & \multicolumn{4}{c||}{Internal Indices} &
     \multicolumn{6}{c|}{External Indices}\\
    \hline
    & \multicolumn{2}{c||}{CH} &
    \multicolumn{2}{c||}{GDI} & 
    \multicolumn{2}{c||}{RI} & \multicolumn{2}{c||}{ARI} & \multicolumn{2}{c|}{NMI}\\
    \hline
    Wafer \# & M = 5 & M = 10 & M = 5 & M = 10 & M = 5 & M = 10 & M = 5 & M = 10 & M = 5 & M = 10 \\
    \hline
        1   & 30.4\% & 30.4\% &  0.00\% &  0.00\% &     1.65\%    &  1.65\% &  3.41\% & 3.41\% & 4.45\%   & 4.45\% \\
        
     2  & 208\% &    208\% &    524\% &    524\%  &     3.13\%  &     3.13\%  &     6.95\%  &     6.95\% &   7.64\%    &  7.64\% \\
     
    3 &   14.7\% &    93.1\%  &    \cellcolor{gray!50}{-2.62\%} &    26.7\% &    3.01\%    &  3.13\% & 8.01\%  &  8.90\% & 3.94\%  &
6.03\% \\

    4 &   5.97\% & 5.97\% & \cellcolor{gray!50}{-0.99\%}  & \cellcolor{gray!50}{-0.99\%}  & \cellcolor{gray!50}{-0.30\%}  &   \cellcolor{gray!50}{ -0.30\%} &   \cellcolor{gray!50}{ -0.71\%} &   \cellcolor{gray!50}{ -0.71\%} &   \cellcolor{gray!50}{ -1.47\%} &   \cellcolor{gray!50}{ -1.47\%} \\
   
    5 &   20.2\% &   28.9\%   &    2.38\%   &  0.94\% &    0.92\%   &     1.13\%  &   1.79\%    &     2.22\%    &     2.23\% &     2.69\% \\
    
    6 &   199\%   &   16.2\%   &     88.8\% &      5.96\%    &      4.78\% &      4.66\% &     12.6\%   &     12.0\% &     12.0\% &      9.22\%\\
    
     7 &   19.9\% &   19.9\% & \cellcolor{gray!50}{-0.32\%}  &   \cellcolor{gray!50}{ -0.32\%} &  2.87\% &    2.87\%  &   6.27\% &   6.27\% &  7.22\% &  7.22\% \\
     
        8 &   45.0\% &   61.7\%   &    55.7\%  & 87.1\%  &  2.18\%    &  1.84\% &   4.78\%    &  4.16\%  &  6.56\%  &  4.37\% \\
        
     9 &   111\% &   20.1\% &     53.9\% &      5.96\%  & 5.85\%    &   5.97\% &   14.3\%  &   14.6\% &  13.0\% &     11.1\% \\
      
      10 &   50.1\% &   100\% &     75.2\% &      61.68\%  & 0.32\%    &   0.77\% &   0.66\%  &   1.58\% &  1.95\% &     2.95\% \\
           
    11 &   41.1\% &   \cellcolor{gray!50}{-60.29\%} &     1594\% &      \cellcolor{gray!50}{-18.6\%}  & 0.00\%    &  \cellcolor{gray!50}{-0.20\%} &   0.00\%  &   \cellcolor{gray!50}{-0.42\%} &  \cellcolor{gray!50}{-0.30\%} &     \cellcolor{gray!50}{-1.10\%} \\
    
    12 &   14.5\% &   23.82\% &     20.31\% &      0.20\%  & 4.16\%   &  5.17\% &  9.77\%  &   12.11\% &  7.62\% &     8.31\%\\ 
        
    \hline
 
 Avg. & 63.3\%   &  45.6\%    &  201\%    &   58.0\%    &   2.36\%    &  2.49\%   &   5.66\%    &   5.93\%   &  5.40\%    & 5.12\%  \\
    \hline
    
    \end{tabular}
    \label{tab:results_imp}
\end{table*}

Table \ref{tab:results_imp} presents the percentage improvements of AC-iWMM relative to CPF-iWMM at $M = 5$, $10$, for all metrics. On average (last row of Table \ref{tab:results_imp}), AC-iWMM achieves an average improvement of up to {$201$}\% over CPF-iWMM in terms of internal metrics, and up to {$6$}\% in terms of external metrics. The difference in scale between the improvements in internal and external indices is attributed to how these metrics are defined in the first place; As described earlier, internal metrics are used to assess the clustering quality \textit{sans} externally provided information about the underlying cluster labels. While external validation metrics are perhaps more interpretable than their internal counterparts, the latter can be extremely useful in practice, since it may be cumbersome for experts to constantly weigh in and provide external information about class labels for all tested wafers. In other words, internal validation provides an automated check point to evaluate the method's performance in real-time. {To confirm the considerable improvement brought by AC-iWMM, we perform the Wilcoxon signed ranked test, which is a nonparametric statistical test of hypothesis, with its null hypothesis suggesting that the difference between a pair of samples follows a symmetric zero-centered distribution. The resulting $p$-values are shown in Table \ref{tab:results_sign}, wherein improvements are shown to be statistically significant for all metrics at a significance level of $0.01$, except for improvements in GDI which are significant at a $0.1$ significance level.}

\begin{table*}[h]
    \caption{{Wilcoxon signed rank test results. Each entry shows the resulting $p$-value for the corresponding metric.}}
            \vspace{-2mm}
    \setlength{\tabcolsep}{1pt}
    \centering
    \begin{tabular}{|c|c||c|c||c|c||c|c||c|c|}
    \hline
    \multicolumn{4}{|c||}{Internal Indices} &
     \multicolumn{6}{c|}{External Indices}\\
    \hline
    \multicolumn{2}{|c||}{CH} &
    \multicolumn{2}{c||}{GDI} & 
    \multicolumn{2}{c||}{RI} & \multicolumn{2}{c||}{ARI} & \multicolumn{2}{c|}{NMI}\\
    \hline
     M = 5 & M = 10 & M = 5 & M = 10 & M = 5 & M = 10 & M = 5 & M = 10 & M = 5 & M = 10 \\
    \hline
    $5 \times 10^{-4}$    &   $7 \times 10^{-3}$  &   $3 \times 10^{-2}$ &  $8 \times 10^{-2}$ & $3 \times 10^{-3}$ & $2 \times 10^{-3}$ & $3 \times 10^{-3}$ & $2 \times 10^{-3}$   &   $2 \times 10^{-3}$ & $2 \times 10^{-3}$      \\
    \hline
    
    \end{tabular}
    \label{tab:results_sign}
\end{table*}

Another interesting observation is the magnitude of improvement realized by AC-iWMM over CPF-iWMM as a function of the defect pattern complexity. Specifically, we note that improvements from AC-iWMM are more pronounced for more complex-shaped defect patterns, and are diminishing as the defect patterns become relatively simpler. For instance, substantial improvements (maximal for some metrics) in Tables \ref{tab:results_int} and \ref{tab:results_ext} come from wafer map \#9, hosting donut and partial ring defect patterns. The results for that wafer is shown in Figure \ref{fig:result6}. Understandably, donut and partial ring defect patterns are, by design, intricate shapes, making the distinction between random and systematic defects a much harder task. This is where the AC approach, through exploiting the local spatial information, can play an instrumental role in improving the quality of the spatial filtering step, and eventually, the downstream clustering and pattern recognition. On the other hand, CPF does not make use of local neighborhood information, which causes the downstream clustering to misidentify several defective chips as an independent sub-cluster.
\begin{figure} [h]
    \centering
    \includegraphics[trim = 0 0.5cm 0 0.5cm, width = 1 \linewidth]{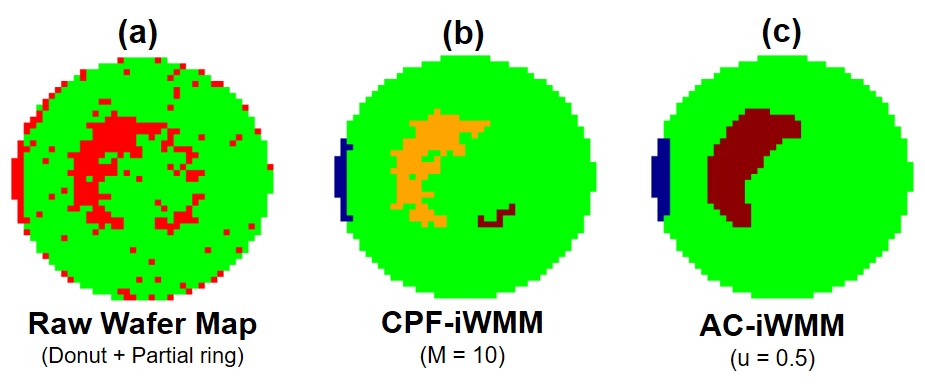}
    \caption{Visual comparison of AC-iWMM and CPF-iWMM on wafer map \#9 with donut and partial ring defects. We note that CPF fails to separate the two sets of defective chips, causing iWMM to mistakenly flag a separate sub-cluster. Here, AC-iWMM achieves substantial improvements over CPF-iWMM owing to its ability to better filter complex-shaped defect patterns.
    }
    \label{fig:result6}
\end{figure}

\begin{figure} [h]
    \centering
    \includegraphics[trim = 0 0.5cm 0 0.5cm,width = 1 \linewidth]{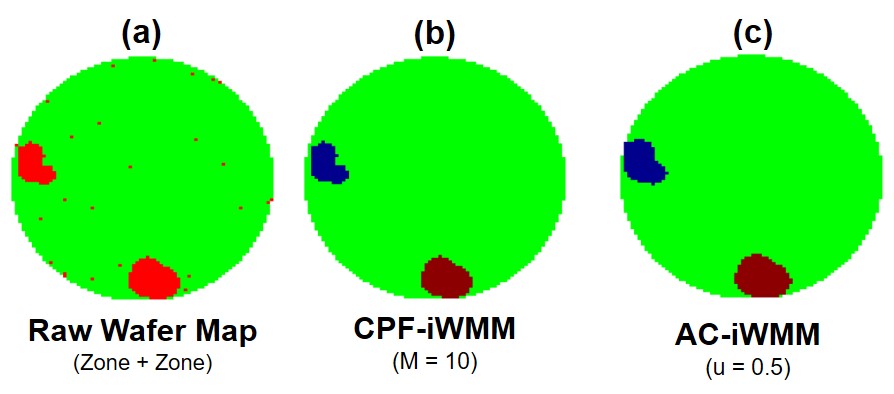}
    \caption{Visual comparison of AC-iWMM and CPF-iWMM on wafer \#4 with two zone defects. We note that both approaches render similar results, visually, and quantitatively. The marginal difference is due to the simplicity of the defect patterns\textemdash two zone defects with sparse random noises in the background, which are effectively filtered by both methods.}
    \label{fig:result5}
\end{figure}

In contrast, wafer map \#4 has a relatively simple mixed-type defect pattern, in which the two zone defects are round shaped and far from each other. Furthermore, the random defects outside the two zones are relatively sparse, which makes the filtering task straightforward. Therefore, both methods were able to produce satisfactory performance, with almost negligible visual differences, as shown in Figure \ref{fig:result5}.

The observations from Figures \ref{fig:result6} (for wafer map \#9) and \ref{fig:result5} (for wafer map \#4) validate our conjecture that the difference in performance of AC-iWMM relative to CPF-iWMM hinges on the complexity of the underlying defect patterns. A closer look at the results in Table \ref{tab:results_imp} suggests the same observation for wafer maps \#6 (visual result shown in Figure \ref{fig:a2} in the Appendix) and \#8 (visual result shown in Figure \ref{fig:result2}), which have complex-shaped patterns, and hence, the benefit from AC-iWMM appears to be more pronounced. As the wafer fabrication process grows in scale and sophistication, owing to technology upgrades, or an increase in the number of processing steps or the density of chips per wafer, wafers are expected to exhibit more intricate and mixed-type defect patterns. Thus, we anticipate that our proposed approach will generate even more profound impacts. \par 

\section{Conclusions}\label{sec:conclus}
In this paper we have proposed a spatial pattern recognition framework (AC-iWMM) for detecting  mixed-type defect patterns in wafer bin map data\textemdash a problem of vital importance to ensuring quality control in the semiconductor manufacturing industry. This framework integrates the adjacency-clustering (AC) model for spatial filtering with an advanced mixture model (iWMM) for spatial clustering. AC has a desirable combinatorial structure and can be solved in polynomial time by a minimum-cut algorithm. 
By utilizing the local neighborhood information, AC is able to effectively distinguish the systematic patterns from random noises. As a result, iWMM, which subsequently acts on the AC-filtered data, can properly cluster the systematic patterns into different types. We validate the superior performance of AC-iWMM on {twelve} real-world wafer bin maps exhibiting different mixed-type defect patterns. Based on both visual and quantitative comparisons, AC-iWMM outperforms the state-of-the-art method in the literature, especially for complex-shaped, mixed-type patterns.

The framework proposed herein can be extended in various directions. One interesting idea is to incorporate additional features to aid with mixed-type SPR such as the number of defects per bin on the premise that defects due to the same assignable cause may exhibit similar defect severity levels. Incorporating this information will drive a departure from using binary wafer maps, which, by consequence, will require a new definition of what constitutes a cluster in our setting.

\appendices
\section{Additional details about iWMM}
Here we provide additional details about iWMM, which was initially proposed by \cite{iwata2012warped}. 
iWMM comprises two building blocks:  (1) a warping function to match the observed spatial locations of the AC-filtered results, denoted by $\mathbf{S} = [\mathbf{s}_1, ..., \mathbf{s}_n]^T$, with a set of latent spatial coordinates in a latent space, denoted by $\mathbf{Z} = [\mathbf{z}_1, ..., \mathbf{z}_n]^T$, and (2) a clustering method which determines the clustering assignments in in the latent space, denoted by $\mathbf{A} = [a_1, ..., a_n]^T$. While in theory, $\mathbf{z}_i$ can have a different dimensionality than $\mathbf{s}_i$, it suffices in our setting to assume that both $\mathbf{s}_i, \mathbf{z}_i \in \mathbb{R}^2$.\par 

As a warping function, a Gaussian process latent variable model (GPLVM) \cite{lawrence2004gaussian} with squared exponential covariance, is used, and can be expressed as in Eq.~(\ref{eq:gplvm}). Then, the infinite Gaussian mixture model (iGMM) is used for spatial clustering, and is expressed as in Eq.~(\ref{eq:igmm}). For iGMM, a Gaussian-Wishart prior is placed on its parameters $\pmb{\mu}_k$ and $\mathbf{V}_k$, such that: 
\begin{equation}
    p(\pmb{\mu}_k,\mathbf{V}_k) = \mathcal{N}(\pmb{\mu}_k|\mathbf{m}, (p\mathbf{V}_k)^{-1})\mathcal{W}(\mathbf{V}_k|\mathbf{R}^{-1}, r),
\end{equation}
where $\mathcal{W}(\cdot)$ is the Wishart distribution. The parameters $\mathbf{m}$, $p$ are the mean and relative precision of $\pmb{\mu}_k$, respectively, while $\mathbf{R}^{-1}$ and $r$ are the scale matrix for $\mathbf{V}_k$, and its degree of freedom, respectively. One can then derive the probability distribution of $\mathbf{Z}$ given the clustering assignments $\mathbf{A}$ by integrating out $\pmb{\mu}_k$ and $\mathbf{V}_k$, as in Eq.~(\ref{eq:Z}). 
\begin{equation}
    p(\mathbf{Z}|\mathbf{A},\mathbf{R},\mathbf{m},r,p) = \prod_{k=1}^{\infty} \pi^{-n_k} \frac{p|\mathbf{R}|^{r/2}}{p_k|\mathbf{R}_k|^{r_{k}/2}} \times \prod_{j=1}^{2} \frac{\Gamma(\frac{r_k+1-j}{2})}{\Gamma(\frac{r+1-j}{2})},
    \label{eq:Z}
\end{equation}
where $n_k$ is the number of chips assigned to the $k$th sub-cluster, while $p_k$, $r_k$, and $\mathbf{S}_k$ are the posterior Gaussian-Wishart parameters of the $k$th component (or sub-cluster), such that $p_k = p + n_k$, $r_k = r + n_k$, and $\mathbf{S}_k = \mathbf{S} + \sum_{i:a_i=k} \mathbf{z}_i \mathbf{z}_i^T + p\mathbf{m}\mathbf{m}^T - p_k \mathbf{m}_k \mathbf{m}_k^T$, with $\mathbf{m}_k = \frac{p\mathbf{m} + \sum_{i:a_i=k} \mathbf{z}_i}{p + n_k}$. \par 

A Dirichlet process prior with concentration parameter $\alpha$ is used for infinite mixture modeling in the latent space. Then, the probability distribution of $\mathbf{A}$ can be written as:
\begin{equation}
p(\mathbf{A}|\alpha) = \frac{\alpha^k \prod_{k=1}^K (n_k - 1)!}{\alpha (\alpha+1)...(\alpha+n-1)},
\label{eq:multi}
\end{equation}
 
 Collecting the above pieces, the joint distribution of $\mathbf{S}$, $\mathbf{Z}$, and $\mathbf{A}$ conditional on all parameters, can be written as: 
\begin{equation}
\label{eq:all}
\small
    p(\mathbf{S},\mathbf{Z},\mathbf{A}|\pmb{\Theta},\mathbf{R},\mathbf{m},r, p,\alpha) = p(\mathbf{S}|\mathbf{Z},\pmb{\Theta}) p(\mathbf{Z}|\mathbf{A},\mathbf{R},\mathbf{m},r,p) p(\mathbf{A}|\alpha),
\end{equation}
 which is merely the product of the terms determined by Eqs.~(\ref{eq:gplvm}), (\ref{eq:Z}), and (\ref{eq:multi}). \par 

The authors in \cite{iwata2012warped} provide a detailed procedure to fit the iWMM to a set of observed spatial locations $\mathbf{S}$, where the latent coordinates $\mathbf{Z}$, assignments $\mathbf{A}$, as well as remaining parameters are inferred through a Markov Chain Monte Carlo (MCMC)-based procedure. The procedure consists of two steps, which are repeatedly performed until convergence. The first step entails a Gibbs sampling scheme of the latent assignment of the $i$th chip, denoted by $a_i$, from the following probability distribution: 
\begin{equation}
\label{eq:assigns}
\begin{gathered}
    p(a_i = k|\mathbf{Z},\mathbf{A}^{-i},\mathbf{R},\mathbf{m},r,p,\alpha) \\ \propto 
    \begin{cases}
    n_k^{-i} p(\mathbf{z}_i|\mathbf{Z}_k^{-i},\mathbf{R},\mathbf{m},r,p) \hspace{4mm} \text{\small assign to an existing sub-cluster}\\
    \alpha p(\mathbf{z}_i|\mathbf{R},\mathbf{m},r,p) \hspace{15mm} \text{\small form a new sub-cluster,}
    \end{cases}
    \end{gathered}
\end{equation}
where $\mathbf{Z}_k^{-i}$ is the set of latent coordinates of the $k$th sub-cluster, excluding the $i$th chip. Similarly, $\mathbf{A}^{-i}$ is the set of assignments, excluding that of the $i$th chip, and $n_k^{-i}$ is the number of chips assigned to the $k$th sub-cluster, excluding the $i$th chip. The probability distributions in the right hand-side of Eq.~(\ref{eq:assigns}) can be analytically derived in closed-form as detailed in \cite{iwata2012warped}. The second step entails sampling the latent coordinates $\mathbf{Z}$ from the probability distribution $p(\mathbf{Z}|\mathbf{A},\mathbf{S},\pmb{\Theta},\mathbf{R},\mathbf{m},r,p)$ using hybrid Monte Carlo. Combined, the two steps yield an estimate of the posterior distribution of the latent coordinates $\mathbf{Z}$ and the latent assignments $\mathbf{A}$. 
\vspace{-0.1cm}
\section{Clustering results for all wafers}
In Figure \ref{fig:a2}, we show the visual clustering results for all {twelve} wafers depicted in Figure \ref{fig:wafers}. 

\begin{figure} [!h]
    \centering
    \includegraphics[width = 1 \linewidth]{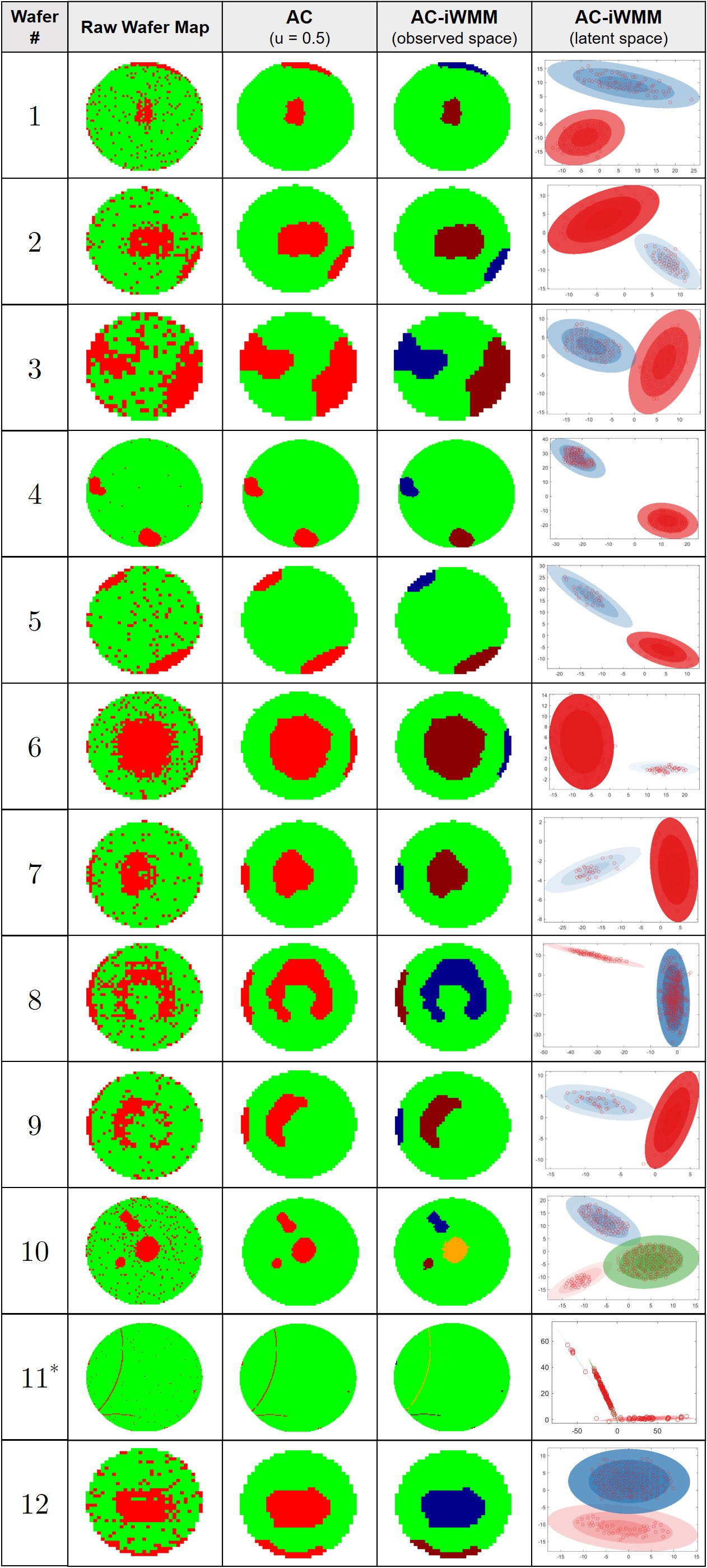}
    \caption{{Results from all $12$ wafers, starting from wafer id (first column), raw wafer maps (second column), to AC filtering results (third column), and then to iWMM clustering as applied to the AC-filtered data, in both the original and latent spaces (fourth and fifth columns, respectively). * denotes $u = 0.4$}.}
    \vspace{-0.25cm}
    \label{fig:a2}
\end{figure}
\vspace{-0.1cm}
{\section{}}
\vspace{-0.1cm}
\nomenclature{$V$}{The set of nodes (chip indices)}
\nomenclature{$E$}{The set of edges (pairs of adjacent chips)}
\nomenclature{$G$}{The graph representation of a wafer}
\nomenclature{$d_i$}{The number of defects on the $i$th chip}
\nomenclature{$x_i$}{Cluster membership for the $i$th chip}
\nomenclature{$f_i(\cdot)$}{Deviation cost function in the AC formulation}
\nomenclature{$g_{ij}(\cdot)$}{Separation cost function in the AC formulation}
\nomenclature{$z_{ij}$}{Difference in cluster labels of the $i$th and $j$th chips}
\nomenclature{$w_{i}$}{Deviation cost of the $i$th chip}
\nomenclature{$u_{ij}$}{Separation cost of the $i$th and $j$th chips (for $[i,j] \in E$)}
\nomenclature{$\mathbf{s}_i$}{The observed coordinates of the $i$th defective chip}
\nomenclature{$\mathbf{S}$}{The set of observed coordinates of the defective chips}
\nomenclature{$n$}{The number of defective chips}
\nomenclature{$\mathbf{z}_i$}{The latent coordinates of the $i$th defective chip}
\nomenclature{$\mathbf{Z}$}{The set of latent coordinates of the defective chips}
\nomenclature{${a}_i$}{The true assignment of the $i$th defective chip}
\nomenclature{$\mathbf{A}$}{The vector of true assignments of defective chips}
\nomenclature{$\hat{a}_i$}{The predicted assignment of the $i$th defective chip}
\nomenclature{$\hat{\mathbf{A}}$}{The vector of predicted assignments of defective chips}
\nomenclature{$\pmb{\Sigma}$}{The $n \times n$ covariance matrix whose $i$th and $j$th entry is the covariance between latent coordinates $\mathbf{z}_i$ and $\mathbf{z}_j$}
\nomenclature{$C(\cdot)$}{A stationary parametric covariance function}
\nomenclature{$\pmb{\Theta}$}{Hyperparameters of the covariance function $C(\cdot)$}
\nomenclature{$\pmb{\mu}_k$}{The mean of the $k$th mixture component}
\nomenclature{$\mathbf{V}_k$}{The precision matrix of the $k$th mixture component}
\nomenclature{$\phi_k$}{The weight of the $k$th mixture component}
\nomenclature{$\mathbf{G}$}{The center of all coordinates in $\mathbf{S}$}
\nomenclature{$\mathbf{S}^k$}{The chip coordinates of the $k$th sub-cluster}
\nomenclature{$\mathbf{s}_i^k$}{The $i$th chip's coordinates of the $k$th sub-cluster}
\nomenclature{$\mathbf{G}^k$}{The center of all coordinates in $\mathbf{S}^k$}
\nomenclature{$\gamma$}{The number of pairs pertaining to identical sub-clusters in $\mathbf{A}$ and $\hat{\mathbf{A}}$}
\nomenclature{$\beta$}{The number of pairs pertaining to different sub-clusters in $\mathbf{A}$ and $\hat{\mathbf{A}}$}
\nomenclature{$\tau$}{The number of pairs pertaining to identical sub-clusters in $\mathbf{A}$ and different sub-clusters in $\hat{\mathbf{A}}$}
\nomenclature{$\zeta$}{The number of pairs pertaining to different sub-clusters in $\mathbf{A}$ and identical sub-clusters in $\hat{\mathbf{A}}$}
\nomenclature{$M$}{The pre-set threshold for the CPF method}
\nomenclature{$\mathcal{W}(\cdot)$}{The Wishart distribution}
\nomenclature{$\mathbf{m}$}{The mean parameter for $\pmb{\mu}_k$}
\nomenclature{$p$}{The relative precision parameter for $\pmb{\mu}_k$}
\nomenclature{$\mathbf{R}^{-1}$}{The scale matrix for $\mathbf{V}_k$}
\nomenclature{$r$}{The number of degrees of freedom for $\mathbf{V}_k$}
\nomenclature{$\alpha$}{Concentration parameter for the Dirichlet process}
\nomenclature{$\mathbf{Z}_k^{-i}$}{The set of latent coordinates of the $k$th sub-cluster, excluding the $i$th chip}
\nomenclature{$\mathbf{A}^{-i}$}{The set of assignments, excluding the $i$th chip}
\nomenclature{$n_k$}{The number of chips assigned to the $k$th sub-cluster}
\nomenclature{$n_k^{-i}$}{The number of chips assigned to the $k$th sub-cluster, excluding the $i$th chip}
\printnomenclature

\section*{Acknowledgment}
The research of Dorit S. Hochbaum is supported in part by NSF award No. CMMI-1760102.

\ifCLASSOPTIONcaptionsoff
  \newpage
\fi



\bibliographystyle{IEEEtran}
\bibliography{references}

\begin{thebibliography}{10}
\providecommand{\url}[1]{#1}
\csname url@samestyle\endcsname
\providecommand{\newblock}{\relax}
\providecommand{\bibinfo}[2]{#2}
\providecommand{\BIBentrySTDinterwordspacing}{\spaceskip=0pt\relax}
\providecommand{\BIBentryALTinterwordstretchfactor}{4}
\providecommand{\BIBentryALTinterwordspacing}{\spaceskip=\fontdimen2\font plus
\BIBentryALTinterwordstretchfactor\fontdimen3\font minus
  \fontdimen4\font\relax}
\providecommand{\BIBforeignlanguage}[2]{{%
\expandafter\ifx\csname l@#1\endcsname\relax
\typeout{** WARNING: IEEEtran.bst: No hyphenation pattern has been}%
\typeout{** loaded for the language `#1'. Using the pattern for}%
\typeout{** the default language instead.}%
\else
\language=\csname l@#1\endcsname
\fi
#2}}
\providecommand{\BIBdecl}{\relax}
\BIBdecl

\bibitem{el2003reusable}
K.~S. El-Kilany, ``Reusable modelling and simulation of flexible manufacturing
  for next generation semiconductor manufacturing facilities,'' Ph.D.
  dissertation, Dublin City School. School of Mechanical and Manufacturing
  Engineering, 2003.

\bibitem{hwang2007model}
J.~Y. Hwang and W.~Kuo, ``Model-based clustering for integrated circuit yield
  enhancement,'' \emph{European Journal of Operational Research}, vol. 178,
  no.~1, pp. 143--153, 2007.

\bibitem{kim2018detection}
J.~Kim, Y.~Lee, and H.~Kim, ``Detection and clustering of mixed-type defect
  patterns in wafer bin maps,'' \emph{IISE Transactions}, vol.~50, no.~2, pp.
  99--111, 2018.

\bibitem{cunningham1998statistical}
S.~P. Cunningham and S.~Mackinnon, ``Statistical methods for visual defect
  metrology,'' \emph{IEEE Transactions on Semiconductor Manufacturing},
  vol.~11, no.~1, pp. 48--53, 1998.

\bibitem{neo2017failure}
S.~Neo, A.~Quah, G.~Ang, D.~Nagalingam, H.~Ma, S.~Ting, C.~Soo, C.~Chen,
  Z.~Mai, and J.~Lam, ``Failure analysis methodology on donut pattern failure
  due to photovoltaic electrochemical effect,'' \emph{Microelectronics
  Reliability}, vol.~76, pp. 255--260, 2017.

\bibitem{tello2018deep}
G.~Tello, O.~Y. Al-Jarrah, P.~D. Yoo, Y.~Al-Hammadi, S.~Muhaidat, and U.~Lee,
  ``Deep-structured machine learning model for the recognition of mixed-defect
  patterns in semiconductor fabrication processes,'' \emph{IEEE Transactions on
  Semiconductor Manufacturing}, vol.~31, no.~2, pp. 315--322, 2018.

\bibitem{kyeong2018classification}
K.~Kyeong and H.~Kim, ``Classification of mixed-type defect patterns in wafer
  bin maps using convolutional neural networks,'' \emph{IEEE Transactions on
  Semiconductor Manufacturing}, vol.~31, no.~3, pp. 395--402, 2018.

\bibitem{kong2019recognition}
Y.~Kong and D.~Ni, ``Recognition and location of mixed-type patterns in wafer
  bin maps,'' in \emph{2019 IEEE International Conference on Smart
  Manufacturing, Industrial \& Logistics Engineering (SMILE)}.\hskip 1em plus
  0.5em minus 0.4em\relax IEEE, 2019, pp. 4--8.

\bibitem{lee2020semi}
H.~Lee and H.~Kim, ``Semi-supervised multi-label learning for classification of
  wafer bin maps with mixed-type defect patterns,'' \emph{IEEE Transactions on
  Semiconductor Manufacturing}, 2020, accepted.

\bibitem{wang2020deformable}
J.~Wang, C.~Xu, Z.~Yang, J.~Zhang, and X.~Li, ``Deformable convolutional
  networks for efficient mixed-type wafer defect pattern recognition,''
  \emph{IEEE Transactions on Semiconductor Manufacturing}, 2020, accepted.

\bibitem{yuan2007model}
T.~Yuan and W.~Kuo, ``A model-based clustering approach to the recognition of
  the spatial defect patterns produced during semiconductor fabrication,''
  \emph{IIE Transactions}, vol.~40, no.~2, pp. 93--101, 2007.

\bibitem{yuan2008spatial}
------, ``Spatial defect pattern recognition on semiconductor wafers using
  model-based clustering and {B}ayesian inference,'' \emph{European Journal of
  Operational Research}, vol. 190, no.~1, pp. 228--240, 2008.

\bibitem{yuan2011detection}
T.~Yuan, W.~Kuo, and S.~J. Bae, ``Detection of spatial defect patterns
  generated in semiconductor fabrication processes,'' \emph{IEEE Transactions
  on Semiconductor Manufacturing}, vol.~24, no.~3, pp. 392--403, 2011.

\bibitem{jin2019novel}
C.~H. Jin, H.~J. Na, M.~Piao, G.~Pok, and K.~H. Ryu, ``A novel {DBSCAN}-based
  defect pattern detection and classification framework for wafer bin map,''
  \emph{IEEE Transactions on Semiconductor Manufacturing}, vol.~32, no.~3, pp.
  286--292, 2019.

\bibitem{wang2009separation}
C.-H. Wang, ``Separation of composite defect patterns on wafer bin map using
  support vector clustering,'' \emph{Expert Systems with Applications},
  vol.~36, no.~2, pp. 2554--2561, 2009.

\bibitem{chao2009wafer}
L.-C. Chao and L.-I. Tong, ``Wafer defect pattern recognition by multi-class
  support vector machines by using a novel defect cluster index,'' \emph{Expert
  Systems with Applications}, vol.~36, no.~6, pp. 10\,158--10\,167, 2009.

\bibitem{taam1993detecting}
W.~Taam and M.~Hamada, ``Detecting spatial effects from factorial experiments:
  An application from integrated-circuit manufacturing,'' \emph{Technometrics},
  vol.~35, no.~2, pp. 149--160, 1993.

\bibitem{jeong2008automatic}
Y.-S. Jeong, S.-J. Kim, and M.~K. Jeong, ``Automatic identification of defect
  patterns in semiconductor wafer maps using spatial correlogram and dynamic
  time warping,'' \emph{IEEE Transactions on Semiconductor Manufacturing},
  vol.~21, no.~4, pp. 625--637, 2008.

\bibitem{taha2017clustering}
K.~Taha, K.~Salah, and P.~D. Yoo, ``Clustering the dominant defective patterns
  in semiconductor wafer maps,'' \emph{IEEE Transactions on Semiconductor
  Manufacturing}, vol.~31, no.~1, pp. 156--165, 2017.

\bibitem{white2008classification}
K.~P. White, B.~Kundu, and C.~M. Mastrangelo, ``Classification of defect
  clusters on semiconductor wafers via the hough transformation,'' \emph{IEEE
  Transactions on Semiconductor Manufacturing}, vol.~21, no.~2, pp. 272--278,
  2008.

\bibitem{liu2014detecting}
Y.~Liu and S.~Zhou, ``Detecting point pattern of multiple line segments using
  hough transformation,'' \emph{IEEE Transactions on Semiconductor
  Manufacturing}, vol.~28, no.~1, pp. 13--24, 2014.

\bibitem{kim2015regularized}
B.~Kim, Y.-S. Jeong, S.~H. Tong, I.-K. Chang, and M.-K. Jeongyoung, ``A
  regularized singular value decomposition-based approach for failure pattern
  classification on fail bit map in a dram wafer,'' \emph{IEEE Transactions on
  Semiconductor Manufacturing}, vol.~28, no.~1, pp. 41--49, 2015.

\bibitem{wang2019wafer}
R.~Wang and N.~Chen, ``Wafer map defect pattern recognition using
  rotation-invariant features,'' \emph{IEEE Transactions on Semiconductor
  Manufacturing}, vol.~32, no.~4, pp. 596--604, 2019.

\bibitem{piao2018decision}
M.~Piao, C.~H. Jin, J.~Y. Lee, and J.-Y. Byun, ``Decision tree ensemble-based
  wafer map failure pattern recognition based on radon transform-based
  features,'' \emph{IEEE Transactions on Semiconductor Manufacturing}, vol.~31,
  no.~2, pp. 250--257, 2018.

\bibitem{yu2015wafer}
J.~Yu and X.~Lu, ``Wafer map defect detection and recognition using joint local
  and nonlocal linear discriminant analysis,'' \emph{IEEE Transactions on
  Semiconductor Manufacturing}, vol.~29, no.~1, pp. 33--43, 2015.

\bibitem{adly2015randomized}
F.~Adly, P.~D. Yoo, S.~Muhaidat, Y.~Al-Hammadi, U.~Lee, and M.~Ismail,
  ``Randomized general regression network for identification of defect patterns
  in semiconductor wafer maps,'' \emph{IEEE Transactions on Semiconductor
  Manufacturing}, vol.~28, no.~2, pp. 145--152, 2015.

\bibitem{saqlain2019voting}
M.~Saqlain, B.~Jargalsaikhan, and J.~Y. Lee, ``A voting ensemble classifier for
  wafer map defect patterns identification in semiconductor manufacturing,''
  \emph{IEEE Transactions on Semiconductor Manufacturing}, vol.~32, no.~2, pp.
  171--182, 2019.

\bibitem{chen2000neural}
F.-L. Chen and S.-F. Liu, ``A neural-network approach to recognize defect
  spatial pattern in semiconductor fabrication,'' \emph{IEEE Transactions on
  Semiconductor Manufacturing}, vol.~13, no.~3, pp. 366--373, 2000.

\bibitem{su2002neural}
C.-T. Su, T.~Yang, and C.-M. Ke, ``A neural-network approach for semiconductor
  wafer post-sawing inspection,'' \emph{IEEE Transactions on Semiconductor
  Manufacturing}, vol.~15, no.~2, pp. 260--266, 2002.

\bibitem{di2005unsupervised}
F.~Di~Palma, G.~De~Nicolao, G.~Miraglia, E.~Pasquinetti, and F.~Piccinini,
  ``Unsupervised spatial pattern classification of electrical-wafer-sorting
  maps in semiconductor manufacturing,'' \emph{Pattern Recognition Letters},
  vol.~26, no.~12, pp. 1857--1865, 2005.

\bibitem{huang2007clustered}
C.-J. Huang, ``Clustered defect detection of high quality chips using
  self-supervised multilayer perceptron,'' \emph{Expert Systems with
  Applications}, vol.~33, no.~4, pp. 996--1003, 2007.

\bibitem{liu2013intelligent}
C.-W. Liu and C.-F. Chien, ``An intelligent system for wafer bin map defect
  diagnosis: An empirical study for semiconductor manufacturing,''
  \emph{Engineering Applications of Artificial Intelligence}, vol.~26, no. 5-6,
  pp. 1479--1486, 2013.

\bibitem{chien2013system}
C.-F. Chien, S.-C. Hsu, and Y.-J. Chen, ``A system for online detection and
  classification of wafer bin map defect patterns for manufacturing
  intelligence,'' \emph{International Journal of Production Research}, vol.~51,
  no.~8, pp. 2324--2338, 2013.

\bibitem{lee2016deep}
H.~Lee, Y.~Kim, and C.~O. Kim, ``A deep learning model for robust wafer fault
  monitoring with sensor measurement noise,'' \emph{IEEE Transactions on
  Semiconductor Manufacturing}, vol.~30, no.~1, pp. 23--31, 2016.

\bibitem{nakazawa2018wafer}
T.~Nakazawa and D.~V. Kulkarni, ``Wafer map defect pattern classification and
  image retrieval using convolutional neural network,'' \emph{IEEE Transactions
  on Semiconductor Manufacturing}, vol.~31, no.~2, pp. 309--314, 2018.

\bibitem{yu2019wafer}
N.~Yu, Q.~Xu, and H.~Wang, ``Wafer defect pattern recognition and analysis
  based on convolutional neural network,'' \emph{IEEE Transactions on
  Semiconductor Manufacturing}, vol.~32, no.~4, pp. 566--573, 2019.

\bibitem{imoto2019cnn}
K.~Imoto, T.~Nakai, T.~Ike, K.~Haruki, and Y.~Sato, ``A {CNN}-based transfer
  learning method for defect classification in semiconductor manufacturing,''
  \emph{IEEE Transactions on Semiconductor Manufacturing}, vol.~32, no.~4, pp.
  455--459, 2019.

\bibitem{kong2020semi}
Y.~Kong and D.~Ni, ``A semi-supervised and incremental modeling framework for
  wafer map classification,'' \emph{IEEE Transactions on Semiconductor
  Manufacturing}, vol.~33, no.~1, pp. 62--71, 2020.

\bibitem{hwang2020variational}
J.~Hwang and H.~Kim, ``Variational deep clustering of wafer map patterns,''
  \emph{IEEE Transactions on Semiconductor Manufacturing}, vol.~33, no.~3, pp.
  466--475, 2020.

\bibitem{tsai2020light}
T.-H. Tsai and Y.-C. Lee, ``A light-weight neural network for wafer map
  classification based on data augmentation,'' \emph{IEEE Transactions on
  Semiconductor Manufacturing}, 2020, accepted.

\bibitem{jang2020support}
J.~Jang, M.~Seo, and C.~O. Kim, ``Support weighted ensemble model for open set
  recognition of wafer map defects,'' \emph{IEEE Transactions on Semiconductor
  Manufacturing}, 2020, accepted.

\bibitem{hyun2020memory}
Y.~Hyun and H.~Kim, ``Memory-augmented convolutional neural networks with
  triplet loss for imbalanced wafer defect pattern classification,'' \emph{IEEE
  Transactions on Semiconductor Manufacturing}, 2020, accepted.

\bibitem{hochbaum2018adjacency}
D.~S. Hochbaum and S.~Liu, ``Adjacency-clustering and its application for yield
  prediction in integrated circuit manufacturing,'' \emph{Operations Research},
  vol.~66, no.~6, pp. 1571--1585, 2018.

\bibitem{rue2005gaussian}
H.~Rue and L.~Held, \emph{Gaussian Markov Random Fields: Theory and
  Applications}.\hskip 1em plus 0.5em minus 0.4em\relax Chapman and Hall/CRC,
  2005.

\bibitem{geman1986markov}
S.~Geman and C.~Graffigne, ``Markov random field image models and their
  applications to computer vision,'' in \emph{Proceedings of the International
  Congress of Mathematicians}, vol.~1.\hskip 1em plus 0.5em minus 0.4em\relax
  Berkeley, CA, 1986, pp. 1496--1517.

\bibitem{hansen1997monitoring}
M.~H. Hansen, V.~N. Nair, and D.~J. Friedman, ``Monitoring wafer map data from
  integrated circuit fabrication processes for spatially clustered defects,''
  \emph{Technometrics}, vol.~39, no.~3, pp. 241--253, 1997.

\bibitem{bao2014hierarchical}
L.~Bao, K.~Wang, and R.~Jin, ``A hierarchical model for characterising spatial
  wafer variations,'' \emph{International Journal of Production Research},
  vol.~52, no.~6, pp. 1827--1842, 2014.

\bibitem{hochbaum2001efficient}
D.~S. Hochbaum, ``An efficient algorithm for image segmentation, markov random
  fields and related problems,'' \emph{Journal of the ACM (JACM)}, vol.~48,
  no.~4, pp. 686--701, Jul. 2001.

\bibitem{ahuja2003solving}
R.~K. Ahuja, D.~S. Hochbaum, and J.~B. Orlin, ``Solving the convex cost integer
  dual network flow problem,'' \emph{Management Science}, vol.~49, no.~7, pp.
  950--964, 2003.

\bibitem{iwata2012warped}
T.~Iwata, D.~Duvenaud, and Z.~Ghahramani, ``Warped mixtures for nonparametric
  cluster shapes,'' \emph{arXiv preprint arXiv:1206.1846}, 2012.

\bibitem{lawrence2004gaussian}
N.~D. Lawrence, ``Gaussian process latent variable models for visualisation of
  high dimensional data,'' in \emph{Advances in Neural Information Processing
  Systems}, 2004, pp. 329--336.

\bibitem{williams2006gaussian}
C.~K. Williams and C.~E. Rasmussen, \emph{Gaussian {P}rocesses for {M}achine
  {L}earning}.\hskip 1em plus 0.5em minus 0.4em\relax MIT Press Cambridge, MA,
  2006.

\bibitem{codes}
T.~Iwata and D.~Duvenaud, ``Code for inferring nonparametric cluster shapes
  using warped mixtures,'' \url{https://github.com/duvenaud/warped-mixtures},
  2016.

\bibitem{data}
J.-S.~R. Jang, ``Wafer map dataset,'' \url{http://mirlab.org/dataSet/public/},
  2014.

\bibitem{wu2014wafer}
M.-J. Wu, J.-S.~R. Jang, and J.-L. Chen, ``Wafer map failure pattern
  recognition and similarity ranking for large-scale data sets,'' \emph{IEEE
  Transactions on Semiconductor Manufacturing}, vol.~28, no.~1, pp. 1--12,
  2014.

\bibitem{arbelaitz2013extensive}
O.~Arbelaitz, I.~Gurrutxaga, J.~Muguerza, J.~M. P{\'e}rez, and I.~Perona, ``An
  extensive comparative study of cluster validity indices,'' \emph{Pattern
  Recognition}, vol.~46, no.~1, pp. 243--256, 2013.

\bibitem{calinski1974dendrite}
T.~Cali{\'n}ski and J.~Harabasz, ``A dendrite method for cluster analysis,''
  \emph{Communications in Statistics-Theory and Methods}, vol.~3, no.~1, pp.
  1--27, 1974.

\bibitem{bezdek1998some}
J.~C. Bezdek and N.~R. Pal, ``Some new indexes of cluster validity,''
  \emph{IEEE Transactions on Systems, Man, and Cybernetics, Part B
  (Cybernetics)}, vol.~28, no.~3, pp. 301--315, 1998.

\bibitem{gonzalez2002digital}
R.~C. Gonzalez and R.~E. Woods, \emph{Digital Image Processing}.\hskip 1em plus
  0.5em minus 0.4em\relax Prentice Hall, 2002.

\bibitem{rand1971objective}
W.~M. Rand, ``Objective criteria for the evaluation of clustering methods,''
  \emph{Journal of the American Statistical Association}, vol.~66, no. 336, pp.
  846--850, 1971.

\bibitem{vinh2010information}
N.~X. Vinh, J.~Epps, and J.~Bailey, ``Information theoretic measures for
  clusterings comparison: Variants, properties, normalization and correction
  for chance,'' \emph{Journal of Machine Learning Research}, vol.~11, pp.
  2837--2854, 2010.

\bibitem{vavrek2011fossil}
M.~J. Vavrek, ``Fossil: palaeoecological and palaeogeographical analysis
  tools,'' \emph{Palaeontologia Electronica}, vol.~14, no.~1, pp. 1--16, 2011.

\bibitem{desgraupes2013clustering}
B.~Desgraupes, ``Clustering indices,'' \emph{University of Paris Ouest-Lab
  Modal’X}, vol.~1, p.~34, 2013.

\end{thebibliography}
\end{document}